\newcommand{\tr}{\text{tr}}
\DeclareMathOperator{\atan}{\text{atan}}
\def\cov{\text{cov}}
\def\var{\text{var}}
\def\E{\text{E}}
\newtheorem{lemma}{Lemma}[section]
\newtheorem{theorem}{Theorem}[section]
\newtheorem{proposition}{Proposition}[section]
\numberwithin{equation}{section}
\begin{document}

%%%%%%%%%%%%%%%%%%%%%%%%%%%%%%%%%%%%%%%%%%%%%%%%%%%%%%%%%%%%%%%%%%%%
\author{Matthieu Wilhelm ${}^{1}$\footnote{Corresponding author. E-mail: matthieu.wilhelm@unine.ch
%, Phone: +41~32~7181971, Fax: +41~32~7181381
.} \ and \ Laura M. Sangalli${}^{2}$ \\
{}\\
${}^{1}$ \ \normalsize Institut de Statistique\\[-0.1cm]
%\normalsize Facult\'e de Sciences\\[-0.1cm]
\normalsize Universit\'e de Neuch\^atel\\[-0.1cm]
%\normalsize Bellevaux 51, Neuch\^atel, CH-2000, Switzerland\\
{}\\
${}^{2}$ \ \normalsize MOX -- Laboratorio di Modellistica e di Calcolo Scientifico \\[-0.1cm]
%\normalsize Dipartimento di Matematica  `F. Brioschi' \\[-0.1cm]
\normalsize Politecnico di Milano\\[-0.1cm]
%\normalsize Piazza Leonardo da Vinci 32, 20133 Milano\\
{}\\
}

\title{\textbf{Generalized Spatial Regression with Differential Regularization}}
%\author{
%\name{Matthieu Wilhelm\textsuperscript{a}$^{\ast}$\thanks{$^\ast$Corresponding
%author. Email:  matthieu.wilhelm@unine.ch}
%and Laura M. Sangalli \textsuperscript{b}}
%\affil{
%\textsuperscript{a} Institut de Statistique, Universit\'e de Neuch\^atel, Bellevaux 51, 2000 Neuch\^atel;\newline 
%\textsuperscript{b} MOX -- Laboratorio di Modellistica e di Calcolo Scientifico, Politecnico di Milano, Piazza Leonardo da Vinci 32, 20133 Milano, Italy; 
%}
%}

\maketitle

%%%%%%%%%%%%%%%%%%%%%%%%%%%%%%%%%%%%%%%%%%%%%%%%%%%%%%%%%%%%%%%%%%%%
%%%%%%%%%%%%%%%%%%%%%%%%%%%%%%%%%%%%%%%%%%%%%%%%%%%%%%%%%%%%%%%%%%%%
%%%%%%%%%%%%%%%%%%%%%%%%%%%%%%%%%%%%%%%%%%%%%%%%%%%%%%%%%%%%%%%%%%%%

\begin{abstract}
We aim at analyzing geostatistical and areal data observed over irregularly shaped spatial domains and having a distribution within the exponential family. We propose a generalized additive model that allows to account for spatially-varying covariate information. The model is fitted by maximizing a penalized log-likelihood function, with a roughness penalty term that involves a differential quantity of the spatial field, computed over the domain of interest. Efficient estimation of the spatial field is achieved resorting to the finite element method, which provides a basis for piecewise polynomial surfaces. The proposed model is illustrated by an application to the study of criminality in the city of Portland, Oregon, USA.
%%%%%%%%%%%%%%%%%%%%%%%%%%%%%%%%%%%%%%%%%%%%%%%%%%%%%%%%%%%%%%%%%%%%
\end{abstract}

\textbf{Key words: } Functional Data Analysis, Spatial Models, Finite Element Method.
%%%%%%%%%%%%%%%%%%%%%%%%%%%%%%%%%%%%%%%%%%%%%%%%%%%%%%%%%%%%%%%%%%%%
%%%%%%%%%%%%%%%%%%%%%%%%%%%%%%%%%%%%%%%%%%%%%%%%%%%%%%%%%%%%%%%%%%%%
%%%%%%%%%%%%%%%%%%%%%%%%%%%%%%%%%%%%%%%%%%%%%%%%%%%%%%%%%%%%%%%%%%%%

\section{Introduction and motivation}

We propose a generalized regression model for spatially distributed data, when the response variable has a distribution within the exponential family. One of the main features of the model is that it is able to deal with domains having complex shapes, characterized for instance by strong concavities or holes, and where the shape of the domain influences the behavior of the phenomenon.
%We can include in our model both auxiliary variables and space varying informations.
To illustrate this problem, we consider the study of criminality over the city of Portland, Oregon, USA.
The left panel of Figure \ref{gr:port_data_and_tri} shows a map of this city, cut in two parts by  the Willamette river. The two parts of the city are connected only by a few bridges. The dots over the map indicates the locations of all the crimes reported in 2012. It is apparent that the variation of the phenomenon is not smooth across the river. The map also shows the municipality districts. Census information is available for each district, such as the total number of residents per district. We would like to study the spatially-varying  criminality in the city, taking into account the auxiliary information based on the census. Since the covariate  is available at the level of districts, we aggregate also the crimes, thus considering as outcome of interest the total  crime count over each districts.

When analyzing these data, it appears crucial to accurately take into account the shape of the domain. Features such as the river and the bridges in fact influence the phenomenon expression; see, e.g., \cite{Chainey-Ratcliffe-2005,Ratcliffe-2010}.  See also \cite{Bernasco-Elffers-2010} for a comprehensive review on the statistical analysis of spatial crime data.
Moreover, not restricted to criminology applications, there is a vast literature devoted to the study of spatially varying data having a distribution within the exponential family \citep[see, e.g.,][and references therein]{Diggle07}.
However, these methods are not well suited for the analysis of the data here presented, as they do not account for the complex shape of the problem domain, neglecting for instance natural barriers such as the river.

Recently, some spatial data analysis methods have been proposed where the shape of the domain is  directly specified in the model; these include the spatial regression models with differential regularization proposed in \cite{Ramsay02} and \cite{Sangalli13}, and the soap film smoothing introduced by \cite{Wood08}. Here we propose an extension of the methodology presented in \cite{Sangalli13}, allowing to model response variables having a distribution within the exponential family, including binomial, gamma and Poisson outcomes.
Specifically, we maximize a penalized log-likelihood function with a roughness penalty term that involves a differential quantity  of the spatial field computed over the domain of interest. We name the resulting method GSR-PDE: Generalized Spatial Regression with PDE penalization. To solve the estimation problem, we derive a functional version of the Penalized Iterative Reweighted Least Squares (PIRLS) algorithm \citep{Osullivan86}.  This functional version of the  PIRLS algorithm can be used to maximize penalized log-likelihoods with general quadratic  penalties involving a functional parameter. Likewise  \cite{Ramsay02} and \cite{Sangalli13}, the proposed models make use of  finite elements over a triangulation of the domain of interest, to obtain accurate estimates of the spatial field.
See \cite{Ramsay00} for an earlier use of finite elements in a spatial data analysis context, and \cite{Lindgren11} for the purpose of fitting Gaussian random fields. Domain  triangulations are able to efficiently describe domains with complex geometries. The right panel of Figure \ref{gr:port_data_and_tri} shows a triangulation of the city of Portland. The triangulation accurately renders the strong concavities in the domain represented by the river, and also very localized and detailed structures of the domain such as the bridges that connect the two parts of the city center. The proposed model is detailed both for the case of geostatistical data and for the case of areal data. The model versions for geostatistical  and for areal data are special cases of a unique model, although for simplicity of exposition we introduce first the version for geostatistical data, then the one for areal data, and we postpone to the appendix the unified modelling formulation. Some comparative simulation studies show the good performances of the model.

The paper is organized as follows. In section \ref{sec::model}, we introduce the model, detailed in the case of geostatistical data. In the section \ref{sec::just_PIRLS_variational}, we derive the functional version of the PIRLS algorithm. In section \ref{sec::fem}, we describe the numerical implementation of the fitting procedure. In Section \ref{sec::areal_data_model} we specify the model version for areal data.
Section \ref{sec::sim} is devoted to simulation studies and Section \ref{sec::application} to the study of criminality over the city of Portland. Finally, Section \ref{sec:discussions} draws some directions for future research. All technical details and proofs are deferred to the Appendix.

\begin{figure}[htbp]%[ht]
\begin{subfigure}{.49\linewidth}
\includegraphics[scale=0.159]{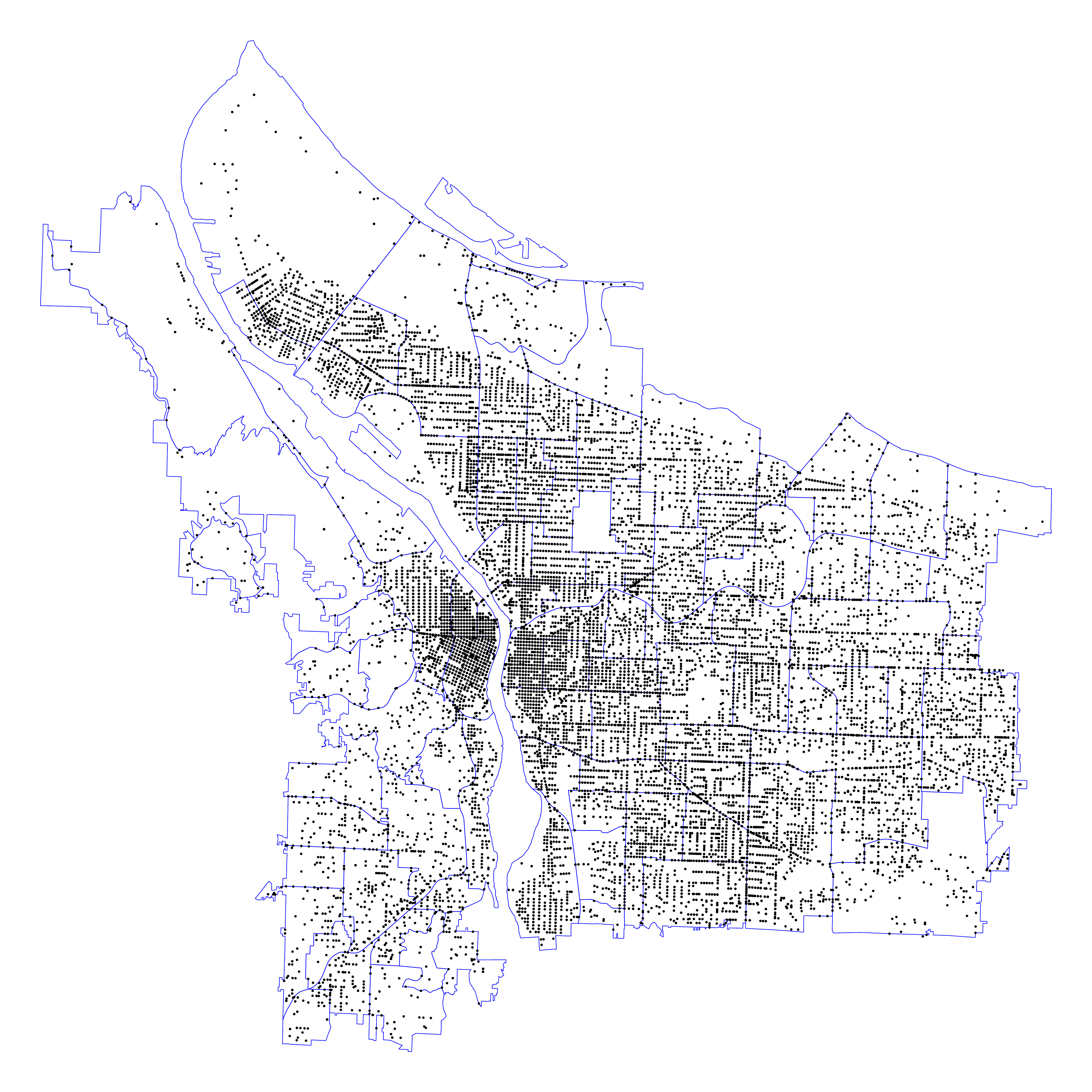}
\end{subfigure}
\begin{subfigure}{.49\linewidth}
\includegraphics[scale=0.159]{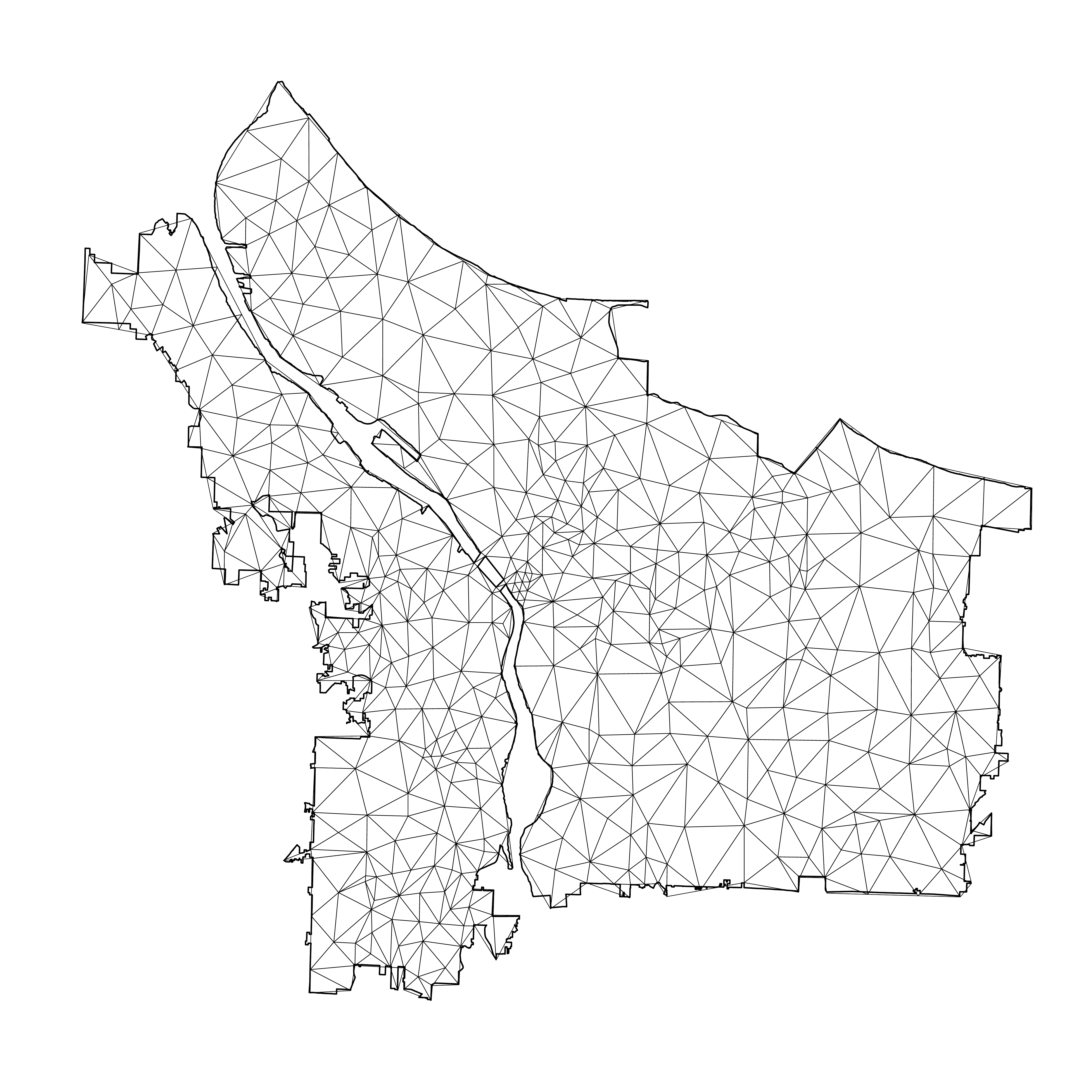}
\end{subfigure}
\caption{\label{gr:port_data_and_tri} Left: crime locations in the city of Portland, Oregon, in 2012. Right: the triangulation of the domain, with the borders of the city districts highlighted in blue. In the data analysis, crimes are aggregated over districts, leading to the total crime counts shown in the left panel of Figure \ref{gr::crime_dist_Port}.}
\end{figure}

\section{Model version for geostatistical data}
\label{sec::model}

We consider a bounded domain $\Omega\subset \mathbb{R}^2$ with a regular boundary $\partial \Omega\in {C}^2(\mathbb{R}^2)$. We consider $n$ fixed locations $\mathbf{p}_1,\dots,\mathbf{p}_n \in \Omega,$ where $\mathbf{p}_i=(p_{1i},p_{2i})$. At each  $\mathbf{p}_i$ we observe the realization $y_i$ of a real variable of interest $Y_i,$ and a vector of covariate information $\mathbf{x}_i\in\mathbb{R}^q$. We assume $Y_1,\dots,Y_n$ are independent, with $Y_i$  having a distribution within the exponential family, with mean $\mu_i$ and common scale paramenter $\phi$. We  model $\mu_i$ by the following generalized additive model: \begin{equation}
\label{equ::model_intro}
g(\mu_i) = \theta_i=  x_i^t \boldsymbol{\beta} +{f}(\mathbf{p}_i),
\end{equation}
where $g$ is a continuously differentiable and strictly monotone canonical link function,  $\boldsymbol{\beta} \in \mathbb{R}^q$ is a vector of coefficients, and $f$ is a smooth field over $\Omega,$  lying in a suitable functional space $\mathcal{F}$. The parameter $\theta$ is referred to as the canonical parameter.

We then propose to estimate the regression coefficients $\boldsymbol{\beta} \in \mathbb{R}^q$   and the spatial field $f\in\mathcal{F}$ by maximizing a penalized log-likelihood functional:
\begin{equation}
\label{equ::pen_like}
\mathcal{L}_{p}(\boldsymbol{\beta}, f)=\sum_{i=1}^n l(y_i;\theta_i(\boldsymbol{\beta}, f))-\lambda \int_{\Omega} \big(\Delta f(\mathbf{p})\big)^2d\mathbf{p},
\end{equation}
where $l(\cdot)$ is the log-likelihood and $\theta_i(\boldsymbol{\beta}, f)=\mathbf{x}_i^t \boldsymbol{\beta} +
f(\mathbf{p}_i)$. Here $\lambda$ is a positive smoothing parameter and the Laplacian $\Delta f=\partial^2 f/\partial p_1^2+ \partial^2 f/\partial p_2^2$ is a  measure of the local curvature of the field $f.$  The higher $\lambda$ is, the more we control the wiggliness of the spatial field $f$, the smaller $\lambda$ is, the more we allow flexibility of $f$. As discussed in Section \ref{sec:discussions}, more complex roughness penalties may be considered. \cite{Azzimonti1,Azzimonti2} for instance  show that by changing the regularizing terms
and considering more complex differential operators it is
possible to include in the model a priori information about
the spatial variation of the phenomenon under study, and model also
space anisotropies and non-stationarities. As commented in Appendix \ref{app::spatial-variation}, the regularizing term
in (\ref{equ::pen_like}) effectively induces the spatial variation structure of the estimator
and different regularizations  imply different variation structures.

  In the case of Gaussian observations, with mean $\mu_i=\theta_i$ and constant variance $\sigma^2$, the maximization of the penalized log-likelihood function is
  equivalent to the minimization of the penalized least-square functional considered in \cite{Sangalli13}. In this case, the quadratic form of the functional  allows to characterize analytically the minimum of the penalized least square functional (or equivalently, the maximum of the penalized log-likelihood functional), and thus to characterize the estimators $\hat{\boldsymbol{\beta}} \in \mathbb{R}^q$   and   $\hat{f}\in\mathcal{F}.$ Outside of the Gaussian case, it is not possible to characterize analytically the solution of the estimation problem. On the other hand, we cannot either apply the standard PIRLS algorithm, developed by \cite{Osullivan86} for the maximization of a penalized log-likelihood functional in the context of generalized additive models. This is due to the fact that  the penalized log-likelihood in (\ref{equ::pen_like})  involves a function parameter, the spatial field $f$, and the maximization is over the space $\mathbb{R}^q \times \mathcal{F},$ where $\mathcal{F}$ is an infinite-dimensional space.  In the following section, we thus present a functional version of the PIRLS algorithm, that can be used to find an approximate solution of the estimation problem here considered. More generally, the proposed functional version of the PIRLS algorithm can be employed   in the context of generalized linear models with a functional parameter, to maximize a penalized log-likelihood that has a quadratic penalty on the functional parameter.

\section{Functional version of the PIRLS algorithm}
\label{sec::just_PIRLS_variational}

We consider the following parametrization of a distribution from the exponential family:$$f_{Y}(y; \theta, \phi)=\exp\left\{(y\theta-b(\theta))/a(\phi) +c(\phi, y)\right\},$$
where $a(\cdot)$, $b(\cdot)$ and $c(\cdot)$ are functions subject to some regularity constraints \citep[see, e.g.,][]{Nelder}. For sake of simplicity, we only consider canonical link functions, that is $b'(\theta)= g^{-1}(\theta),$ and we make no distinction between the natural and the canonical parameter. Moreover, we assume that $a(\phi)=\phi$, this being the case of the most common distributions in the exponential family, including the Gaussian, gamma, binomial and Poisson distributions. We denote by $V(\cdot)$ the  function satisfying $\text{var}(Y)=V(\mu)\phi$.

In our case, the canonical parameter $\theta$ is a function of both $\boldsymbol{\beta} \in \mathbb{R}^q$   and  $f\in\mathcal{F}.$ We consider the more general penalized log-likelihood
\begin{equation}
\label{equ::pen_like_gen}
\mathcal{L}_{p}(\boldsymbol{\beta}, f)=\mathcal{L}({\boldsymbol{\beta}}, {f})-\frac{\lambda}{2}\ m(f,f),
\end{equation}
where $\mathcal{L}$ is the log-likelihood  and $m(\cdot,\cdot): \mathcal{F} \times \mathcal{F}\rightarrow \mathbb{R}$ is any bilinear, symmetric and semi-positive definite form. This allows us to  introduce  the functional version of the PIRLS algorithm for any functional roughness penalty of this general quadratic form.

We here give a sketch of the derivation of the algorithm and refer to  Appendix \ref{app::PIRLS} for all technical details. For simplicity of writing, we introduce a matrix notation: $\mathbf{y}=(y_1,\dots,y_n)^t$ is the vector of observed data values,  $\boldsymbol{\mu}=(\mu_1,\dots,\mu_n)^t$ is the mean vector,  $\mathbf{X} \in \mathbb{R}^{n\times q}$ denotes the design matrix, whose $i$-th row is given by the covariates  $\mathbf{x}_i$ associated to $y_i$, $\mathbf{f}_n=(f(\mathbf{p}_1),\dots,f(\mathbf{p}_n))^t$ is the vector of evaluations of the spatial field $f$ at the $n$ spatial locations and finally $\mathbf{V}$ is a $n\times n$ diagonal matrix with entries $V(\mu_1),\ldots,V(\mu_n)$, where $V(\cdot)$ is the variance function.

First, we show that the problem of maximizing (\ref{equ::pen_like_gen}) with respect to $(\boldsymbol{\beta},f)$ is equivalent to minimizing the following functional $\mathcal{J}_\lambda\left(\boldsymbol{\beta}, f\right)$ with respect to $(\boldsymbol{\beta},f)$:
$$\mathcal{J}_\lambda\left(\boldsymbol{\beta}, f\right)= \|\mathbf{V}^{-1/2}\left(\mathbf{y}-\boldsymbol{\mu}(\boldsymbol{\beta},f)\right) \|^2 + \lambda\ m(f,f),$$
where $\mathbf{V}$ is considered as fixed, and $\boldsymbol{\mu}(\boldsymbol{\beta}, f)$ is given by the equation (\ref{equ::model_intro}). Since $\mathbf{V}$ in reality depends on $\boldsymbol{\beta}$ and $f$,  this suggests an iterative scheme for the solution of the estimation problem. Let $\boldsymbol{\mu}^{(k)}$ be an estimate of $\boldsymbol{\mu}(\boldsymbol{\beta}, {f})$ after $k$ iterations of the algorithm, and let us consider a first order development of $\boldsymbol{\mu}(\boldsymbol{\beta},f)$ in the neighborhood of the current value $\boldsymbol{\mu}^{(k)}=\boldsymbol{\mu}(\boldsymbol{\beta}^{(k)},f^{(k)}).$ We need to introduce the following notation: $\mathbf{z}^{(k)}$ is the current pseudo-data, defined by $\mathbf{z}^{(k)}=\mathbf{G}^{(k)}(\mathbf{y}-\boldsymbol{\mu}^{(k)})+\boldsymbol{\theta}^{(k)}$, where $ \boldsymbol{\theta}^{(k)}$ is the vector with entries $g(\mu^{(k)}_1),\dots,g(\mu^{(k)}_n)$ and $\mathbf{G}^{(k)}$ is the $n \times n$ diagonal matrix with entries $g'({\mu}^{(k)}_1),\dots,g'({\mu}^{(k)}_n)$; moreover, $\mathbf{V}^{(k)}$ is the current value  of $\mathbf{V}$ for $\boldsymbol{\mu}=\boldsymbol{\mu}^{(k)}$ and $\mathbf{W}^{(k)}= (\mathbf{G}^{(k)})^{-2}(\mathbf{V}^{(k)})^{-1}$.
The first order development of $\boldsymbol{\mu}(\boldsymbol{\beta},f)$ in the neighborhood of the current value $\boldsymbol{\mu}^{(k)}$ is to be considered in the space $\mathbb{R}^q\times \mathcal{F}$ and  yields the following quadratic approximation of $\mathcal{J}_\lambda\left(\boldsymbol{\beta}, f\right)$:
\begin{equation}
\tilde{\mathcal{J}}^{(k)}_\lambda\left(\boldsymbol{\beta}, f\right) =  \|(\mathbf{W}^{(k)})^{1/2}(\mathbf{z}^{(k)}-\mathbf{X}\boldsymbol{\beta}-\mathbf{f}_n)\|^2+ \lambda \ m(f,f),
\label{equ::current_min_prob}
\end{equation}

We may thus consider the following iterative scheme. Let $\boldsymbol{\mu}^{(k)}$ be the value of $\boldsymbol{\mu}$ after $k$ iterations of the algorithm. At the $k+1$ iteration, the following steps are performed:
\begin{enumerate}
\item compute $\mathbf{z}^{(k)}$ and $\mathbf{W}^{(k)}$;
\item \label{item:min} find $\boldsymbol{\beta}^{(k+1)}$ and $f^{(k+1)}$ that jointly minimize (\ref{equ::current_min_prob});
\item set $\boldsymbol{\mu}^{(k+1)}= g^{-1}(\mathbf{X}\boldsymbol{\beta}^{(k+1)} + \mathbf{f}_n^{(k+1)})$.
\end{enumerate}
The stopping criterion is based on a sufficiently small variation of two successive values of the functional (\ref{equ::current_min_prob}). The starting value $\boldsymbol{\mu}^0$ is set to $\mathbf{y}$. In the case of binary outcomes, $\boldsymbol{\mu}^0$ is set to $\boldsymbol{\mu}^0= \frac{1}{2}(\mathbf{y}+\frac{1}{2})$.

When a canonical parameter is used, the log-likelihood of an exponential family distribution is strictly concave. Since the penalization term is concave too, the maximum of the penalized log-likelihood is unique, when it exists. Therefore, if the convergence of the functional PIRLS algorithm is reached, it always results in the maximum penalized log-likelihood estimate. In the simulations and application shown in this paper, just a very few iterations (less than 10) of the algorithm were sufficient to reach convergence, as it is usually the case for generalized linear models.

Step \ref{item:min} of the algorithm still involves a minimization problem over an infinite dimensional space.  In  the case where the penalty has the form $m(f,f)=\int_{\Omega} \big(\Delta f(\mathbf{p})\big)^2d\mathbf{p}$, this minimization problem can be solved extending the methodology described in \cite{Sangalli13}. This extension will be the object of the next section. However, the functional version of the PIRLS algorithm  applies more generally to any type of quadratic roughness penalty.

\section{Penalized least-square problem and finite elements}
\label{sec::fem}
We now focus on  the case where the roughness penalty has the form $m(f,f)=\int_{\Omega} \big(\Delta f(\mathbf{p})\big)^2d\mathbf{p}.$ At each iteration of the functional PIRLS algorithm, we thus have to find the values of ${\boldsymbol{\beta}}\in \mathbb{R}^q $ and ${f}\in\mathcal{F}$ that jointly minimize
\begin{equation}
\tilde{\mathcal{J}}_\lambda\left(\boldsymbol{\beta}, f\right) =  \|(\mathbf{W}^{1/2}(\mathbf{z}-\mathbf{X}\boldsymbol{\beta}-\mathbf{f}_n)\|^2+ \lambda \ \int_{\Omega}(\Delta f)^2.
\label{eq:estimation_problem}
\end{equation}
To simplify the notation we drop here and in the following  the dependence on $k,$ the iteration counter.
 Let us then consider what kind of space $\mathcal{F}$ is well-suited for the problem here considered. To do this, we need to introduce the Sobolev space  $H^m(\Omega)$: this is the Hilbert space of all functions which belong to $L^2(\Omega)$ along with all their distributional derivatives up to the order $m$. Since the roughness penalty term $\int_\Omega (\Delta f)^2$ must be well defined, we need $\mathcal{F}\subset H^2(\Omega)$. Note that by the Sobolev embedding theorem,  $H^2(\Omega)\subset C^0(\Omega)$. Thus, a function $f\in H^2(\Omega)$ is  continuous and  can hence be evaluated at pointwise locations, so that it is possible to compute the vector  $\mathbf{f}_n$ in the least-square term (or in the log-likelihood). Moreover, to ensure  uniqueness of the minimizer of (\ref{eq:estimation_problem}), suitable boundary conditions are required.  Boundary conditions are a way to impose a desired behaviour to the estimated function $f$ at the boundaries of the domain of interest. Typically, we can impose conditions on the value of $f$ at the boundary $\partial \Omega$, that is $\left.f\right|_{\partial \Omega}=\gamma_D$ (\emph{Dirichlet} type boundary conditions), or on the flux of the function through the boundary, that is $\left.\partial_\mathbf{n} f \right|_{\partial \Omega} = (\nabla f)^t \mathbf{n}=\gamma_N$  (\emph{Neumann} type boundary conditions), where $\mathbf{n}$ denotes the outward-pointing normal unit vector to the boundary and $\nabla f = (\partial f/\partial p_1, \partial f/ \partial p_2)^t$ is the gradient of the function $f$.  When the functions $\gamma_D$ or $\gamma_N$ coincide with null functions, the condition is said homogeneous. Moreover, it is possible to  impose different boundary conditions on different portions of the boundary, forming a partition of $\partial \Omega$. To ensure the uniqueness of the minimization problem (\ref{eq:estimation_problem}), we here consider the space:
\begin{equation}
\nonumber
\mathcal{F}=H^2_{\mathbf{n}_0}=\left\{f\in H^2 \ | \ (\nabla f)^t\mathbf{n}=0 \text{ on } \partial\Omega \right\}.
\end{equation}
%Other kinds of boundary conditions can ensure existence and uniqueness of the solution to the estimation problem.
The interested reader is referred to \cite{Azzimonti1,Azzimonti2} for the case of general boundary conditions.

\subsection{Characterization of the  solution to the penalized least-square problem}
In the following, we assume that the design matrix $\mathbf{X}$ has full rank and that the weight matrix $\mathbf{W}$ has strictly positive entries. Let $ \mathbf{H}= \mathbf{X}(\mathbf{X}^t\mathbf{W}\mathbf{X})^{-1}\mathbf{X}^t \mathbf{W}$, and $\mathbf{Q} = \mathbf{I} - \mathbf{H}$, where $\mathbf{I}$ is an identity matrix of appropriate dimension. Moreover, for any function $u$ in the considered functional space $\mathcal{F}=H^2_{\mathbf{n}_0}(\Omega),$ we denote by $\mathbf{u}_n= u(\mathbf{p}_1),\dots,u(\mathbf{p}_n)$ the vector of evaluations of  $u$ at the $n$ spatial locations. Finally, we denote by $\tilde{\boldsymbol{\beta}}$ and $\tilde{f}$ the minimizers of the penalized least-square functional $\tilde{\mathcal{J}}_\lambda\left(\boldsymbol{\beta}, f\right)$ in (\ref{eq:estimation_problem}), and  by $\hat{\boldsymbol{\beta}}$ and $\hat{f}$ the maximizers of the penalized log-likelihood functional $\mathcal{L}_{p}(\boldsymbol{\beta}, f)$ in (\ref{equ::pen_like}). Under these assumptions, the following Proposition characterizes the minimizers $\tilde{\boldsymbol{\beta}}$ and $\tilde{f}$ of the penalized least-square functional (\ref{eq:estimation_problem}).
\begin{proposition}
\label{prop::weight_min_prob}
There exists a unique pair  $(\tilde{\boldsymbol{\beta}},\tilde{f}) \in\mathbb{R}^q \times H^2_{\mathbf{n}_0} $ which minimizes (\ref{eq:estimation_problem}). Moreover,
\begin{itemize}
\item $\tilde{\boldsymbol{\beta}}=(\mathbf{X}^t\mathbf{W}\mathbf{X})^{-1}\mathbf{X}^t\mathbf{W}(\mathbf{z}-\tilde{\mathbf{f}}_n),$ where $\tilde{\mathbf{f}}_n=(\tilde{f}(\mathbf{p}_1),\dots,\tilde{f}(\mathbf{p}_n))^t$,
\item $\tilde{f}$ satisfies:
\begin{equation}
\label{equ::weak_sol}
\mathbf{u}_n^t\mathbf{Q}\ \tilde{\mathbf{f}}_n + \lambda \int_{\Omega} (\Delta u) (\Delta \tilde{f}) = \mathbf{u}_n^t\mathbf{Q}\ \mathbf{z}, \qquad \forall u\in H^2_{\mathbf{n}_0}.
\end{equation}
\end{itemize}
\end{proposition}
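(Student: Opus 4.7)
My plan is to exploit the fact that $\tilde{\mathcal{J}}_\lambda$ is quadratic in $\boldsymbol{\beta}$ and convex in $f$, and therefore to first perform a partial minimization over $\boldsymbol{\beta}\in\mathbb{R}^q$ and reduce the problem to an infinite-dimensional variational problem in $f$ alone. Setting the gradient of $\tilde{\mathcal{J}}_\lambda$ with respect to $\boldsymbol{\beta}$ to zero yields the normal equation $\mathbf{X}^t\mathbf{W}\mathbf{X}\boldsymbol{\beta}=\mathbf{X}^t\mathbf{W}(\mathbf{z}-\mathbf{f}_n)$, which, by the full-rank hypothesis on $\mathbf{X}$ and the positivity of the entries of $\mathbf{W}$, is uniquely solved by $\tilde{\boldsymbol{\beta}}=(\mathbf{X}^t\mathbf{W}\mathbf{X})^{-1}\mathbf{X}^t\mathbf{W}(\mathbf{z}-\tilde{\mathbf{f}}_n)$. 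Substituting back and using the identity $\mathbf{z}-\mathbf{X}\tilde{\boldsymbol{\beta}}-\mathbf{f}_n=\mathbf{Q}(\mathbf{z}-\mathbf{f}_n)$, the joint problem reduces to minimizing
\[
J(f)=(\mathbf{z}-\mathbf{f}_n)^t\mathbf{Q}^t\mathbf{W}\mathbf{Q}(\mathbf{z}-\mathbf{f}_n)+\lambda\int_\Omega (\Delta f)^2
\]
over $f\in H^2_{\mathbf{n}_0}(\Omega)$, a convex functional on a Hilbert space.

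To obtain the weak form of the Proposition, I compute the Gateaux derivative of $J$ at $\tilde f$ along an arbitrary direction $u\in H^2_{\mathbf{n}_0}$. Using the symmetry identity $\mathbf{Q}^t\mathbf{W}\mathbf{Q}=\mathbf{W}\mathbf{Q}$ (which follows from $\mathbf{H}^t\mathbf{W}=\mathbf{W}\mathbf{H}$ and $\mathbf{H}^2=\mathbf{H}$), the first-order optimality condition collapses to exactly the Euler--Lagrange identity displayed in (\ref{equ::weak_sol}). Conversely, since $J$ is convex, any solution of the weak form is automatically a global minimizer, so (\ref{equ::weak_sol}) fully characterizes $\tilde f$.

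For existence and uniqueness, I plan to use the direct method in the calculus of variations, equivalently Lax--Milgram applied to the bilinear form $a(u,v)=\mathbf{u}_n^t\mathbf{W}\mathbf{Q}\mathbf{v}_n+\lambda\int_\Omega (\Delta u)(\Delta v)$ on $H^2_{\mathbf{n}_0}$. Continuity is granted by the Sobolev embedding $H^2(\Omega)\hookrightarrow C^0(\Omega)$, which renders the point-evaluation map $u\mapsto \mathbf{u}_n$ continuous on $H^2$, while weak lower semi-continuity of $\int_\Omega (\Delta u)^2$ is standard. The main obstacle is coercivity of $a$: the penalty $\int_\Omega (\Delta u)^2$ is only a semi-norm, whose kernel on $H^2_{\mathbf{n}_0}$ consists of the constants (by Green's identity together with the boundary condition $(\nabla u)^t\mathbf{n}=0$). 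Coercivity must therefore be recovered by combining the penalty with the evaluation term through a Poincar\'e-type inequality
\[
\|u\|_{H^2(\Omega)}^2 \ \le\ C\Bigl(\int_\Omega (\Delta u)^2 + \mathbf{u}_n^t\mathbf{W}\mathbf{Q}\mathbf{u}_n\Bigr)\qquad \forall u\in H^2_{\mathbf{n}_0}(\Omega),
\]
which in turn requires that the constant vector $\mathbf{1}_n$ does not lie in the column space of $\mathbf{X}$ (so that $\mathbf{Q}\mathbf{1}_n\neq 0$), a mild non-collinearity assumption implicit in the setting. Once coercivity is secured, existence and uniqueness of $\tilde f$ are immediate, and then the formula in the first bullet delivers uniqueness of $\tilde{\boldsymbol{\beta}}$ as well.
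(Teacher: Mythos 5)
Your overall route is the same as the paper's: profile out $\boldsymbol{\beta}$ via the weighted normal equations, reduce to a variational problem in $f$ alone, and invoke Lax--Milgram (the paper states the continuity and coercivity of the relevant bilinear form as a separate Lemma and then reads off the weak formulation from the symmetric case of Lax--Milgram, exactly as you propose). Two points of divergence deserve comment. First, your algebra for the profiled functional is careful and yields the quadratic form $\mathbf{Q}^t\mathbf{W}\mathbf{Q}=\mathbf{W}\mathbf{Q}$, so your Euler--Lagrange identity reads $\mathbf{u}_n^t\mathbf{W}\mathbf{Q}\,\tilde{\mathbf{f}}_n+\lambda\int_\Omega(\Delta u)(\Delta\tilde f)=\mathbf{u}_n^t\mathbf{W}\mathbf{Q}\,\mathbf{z}$; this is \emph{not} literally equation (\ref{equ::weak_sol}), which carries $\mathbf{Q}$ rather than $\mathbf{W}\mathbf{Q}$ (the two coincide only when $\mathbf{W}$ is the identity, the unweighted case of \cite{Sangalli13}). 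The paper's own proof drops the $\mathbf{W}$ at the same step, so your derivation is arguably the more accurate one, but you should not claim the condition ``collapses to exactly'' the displayed equation; reconcile the discrepancy explicitly. Second, and more substantively, you handle coercivity differently: the paper's Lemma asserts that $\|\Delta\cdot\|_{L^2}$ is equivalent to the full $H^2$ norm on $H^2_{\mathbf{n}_0}(\Omega)$, so that the penalty term alone is coercive, whereas you observe that this seminorm annihilates constants (which do belong to $H^2_{\mathbf{n}_0}$) and therefore recover coercivity from the data-fit term under the non-collinearity condition $\mathbf{1}_n\notin\operatorname{col}(\mathbf{X})$. Your concern is legitimate and your diagnosis of the kernel is correct; note, however, that your Poincar\'e-type inequality is only asserted, not proved --- completing it requires the generalized Poincar\'e (Peetre--Tartar) argument together with the elliptic a priori estimate $\|u\|_{H^2}\le C(\|\Delta u\|_{L^2}+\|u\|_{L^2})$ for the Neumann problem and the compactness of $H^2(\Omega)\hookrightarrow L^2(\Omega)$ --- and that when $\mathbf{X}$ contains an intercept the stated uniqueness genuinely fails (a constant shift of $f$ can be absorbed into $\boldsymbol{\beta}$), so your extra hypothesis is not merely technical. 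As a proof of the Proposition as literally stated your proposal is therefore incomplete at these two points, but the skeleton matches the paper's and the deviations are, if anything, corrections rather than errors.
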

\begin{proof}
See appendix \ref{app::proof_1}.
\end{proof}
Using Proposition \ref{prop::weight_min_prob} and the functional version of the PIRLS algorithm presented in Section \ref{sec::just_PIRLS_variational}, we have a characterization of the maximum penalized log-likelihood in the functional space $H^2_{\mathbf{n}_0}(\Omega)$.

\subsection{Solution to the penalized least-square problem}
\label{sec::finit_el_sol}

 In this section, we describe the methodology yielding to the solution of the problem of minimizing $\tilde{\mathcal{J}}_{\lambda}(\boldsymbol{\beta}, f)$ with respect to both $\boldsymbol{\beta}$ and $ f$. As stated by the proposition \ref{prop::weight_min_prob}, given $\tilde{f}$, it is easy to compute $\tilde{\boldsymbol{\beta}}$. Then, the crucial point is to find $\tilde{f}$ that satisfies (\ref{equ::weak_sol}). For this purpose, we introduce the space:
$$H_{\mathbf{n}_0}^1= \left\{f \in H^1\ |\ (\nabla f)^t \mathbf{n}=0 \text{ on }\partial\Omega \right\}.$$
 Then, as shown in \cite{Sangalli13},   problem (\ref{equ::weak_sol}) is equivalent to finding  $(\tilde{f},\tilde{h})\in H_{\mathbf{n}_0}^1(\Omega)\times H_{\mathbf{n}_0}^1(\Omega)\,$ such that
\begin{equation}
\label{equ::variational_problem}
\left\{
\renewcommand{\arraystretch}{2}
\begin{array}{ll}
\displaystyle \mathbf{u}_n^t \mathbf{Q} \ \tilde{\mathbf{f}}_n - \lambda \int_{\Omega} (\nabla u)^t \nabla \tilde{h} =\mathbf{u}_n^t \mathbf{Q} \ \mathbf{z} \\
\displaystyle -\int_{\Omega} (\nabla \tilde{f})^t \nabla v = \int_{\Omega} \tilde{h}\ v.
\end{array}
\right.
\end{equation}
for any $(u,v) \in H^1_{\mathbf{n}_0} \times H_{\mathbf{n}_0}^1(\Omega)$.  This formulation requires less regularity on the functions involved with respect to  formulation (\ref{equ::weak_sol}), defined in $H^2_{\mathbf{n}_0}(\Omega).$ In the following section, we show how we can use the finite element method  to construct a finite dimensional subspace of $H_{\mathbf{n}_0}^1(\Omega)$, and hence to compute an approximate solution to (\ref{equ::variational_problem}) in such space.

\subsection{Finite elements}
\label{sec::finit_el}
 The finite element method is widely used in engineering applications to numerically solve problems involving partial differential equations \citep[see, e.g.,][]{Quarteroni14}.

To construct a finite element space, we start by partitioning the domain of interest $\Omega$ into small subdomains.  Convenient domain partitions are given for instance by triangular meshes. Figure \ref{gr:port_data_and_tri}, right panel, shows for example a triangulation of the domain of interest for the study of criminality in the city of Portland.  We consider a regular triangulation $\mathcal{T}$ of  $\Omega$, where adjacent triangles share either a vertex or a  complete edge. The domain $\Omega$ is hence approximated by the domain $\Omega_{\mathcal{T}}$ consisting of the union of all triangles, so that the boundary $\partial \Omega$ of $\Omega$ is approximated by a polygon (or more polygons, in the case for instance of domains with interior holes). % It is assumed, therefore, that the number and density of triangles in $\mathcal{T},$ with the associated finite element basis,  is sufficient to  adequately describe the data.
The triangulation is able to describe accurately the complex domain geometry, with its strong concavities corresponding to the river and detailed local structures such as the bridges that connect the two sides of the city center.

Starting from the triangulation, locally supported polynomial functions are defined over the triangles,  providing a set of basis functions $\psi_1,\dots,\psi_K,$ that span a finite dimensional subspace $\mathcal{F}_K$ of $H_{\mathbf{n}_0}^1$. Linear finite elements are for instance obtained considering a basis system where each basis function $\psi_i$ is associated with a  vertex $\boldsymbol{\xi}_i, i=1,\dots, K,$ of the triangulation $\mathcal{T}.$  This basis function $\psi_i$  is a piecewise linear polynomial which takes the value one at the vertex $\boldsymbol{\xi}_i$ and the value zero on all the other vertices of the mesh, i.e., $\psi_i(\boldsymbol{\xi}_j)=\delta_{ij}$, for all $i,j=1,\dots, K$, where $\delta_{ij}$ denotes the Kronecker symbol. Figure \ref{gr:line_el_fin} shows an example of such linear finite element basis function on a planar mesh, highlighting the locally supported nature of the basis.
\begin{figure}[htbp]
\centering
\includegraphics[scale=0.3]{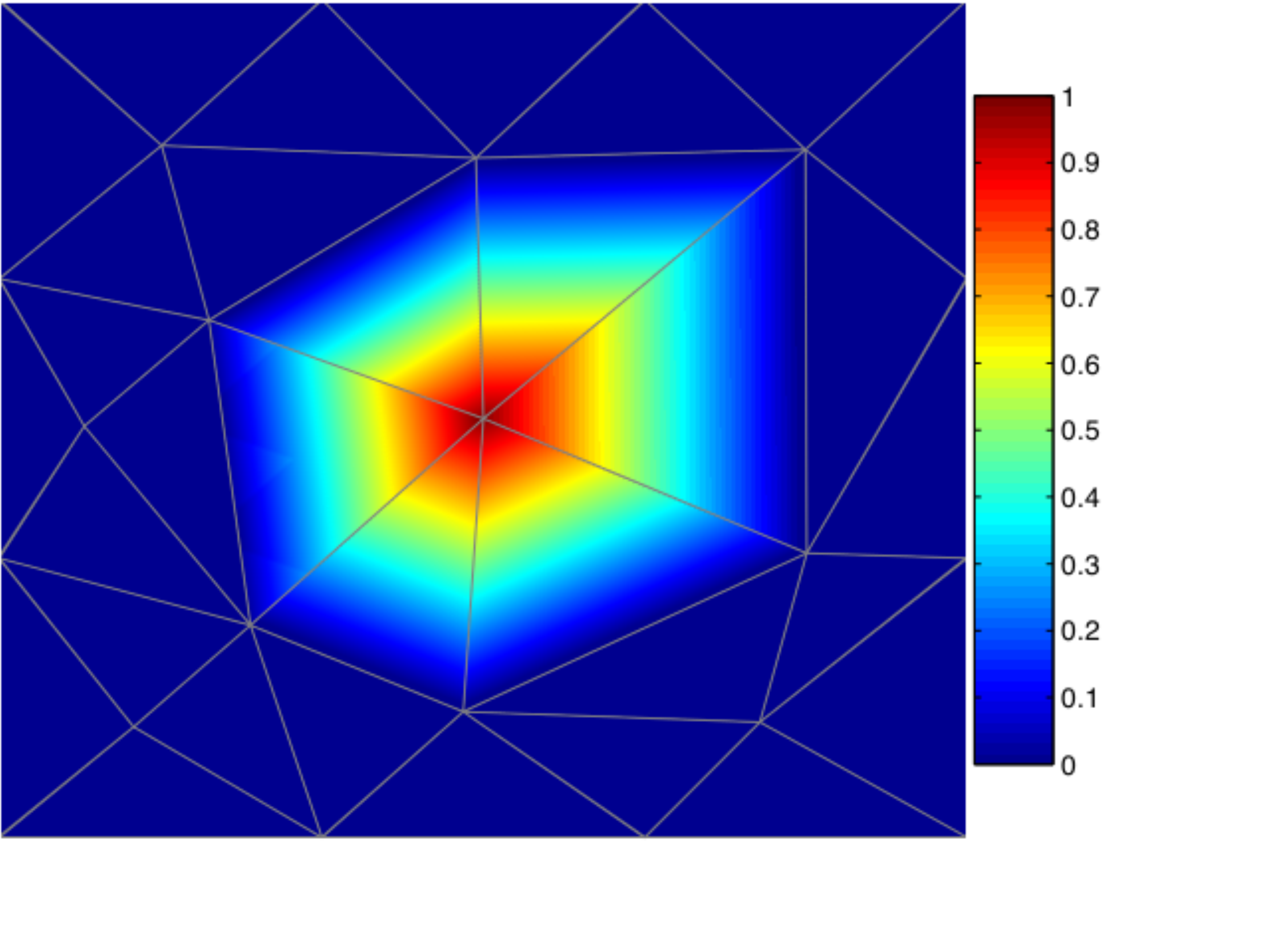}
\caption{\label{gr:line_el_fin}Linear finite element basis function.}
\end{figure}

Now, let $\boldsymbol{\psi}=(\psi_1,\ldots,\psi_K)^t$ be the column vector collecting the $K$ piecewise linear basis functions associated with the $K$ vertices $\boldsymbol{\xi}_1, \dots, \boldsymbol{\xi}_K$. Each function $h$ in the finite element space $\mathcal{F}_K$ can be represented as an expansion in terms of the basis functions $\psi_1,\ldots,\psi_K$. Let $\mathbf{h}=(h_1,\dots,h_K)$ be the coefficients of the basis expansion of $h$, that is the coefficients such that
$$ h(\cdot)=\sum_{j=1}^K h_j \psi_j(\cdot)=\mathbf{h}^t \boldsymbol{\psi}(\cdot).$$
Note that
$$h(\boldsymbol{\xi}_i)=\sum_{j=1}^K h_j \psi_j(\boldsymbol{\xi}_i) = \sum_{j=1}^K h_j \delta_{ij} = h_i, $$
hence
$$\mathbf{h}=\big(h(\boldsymbol{\xi}_1),\ldots,h(\boldsymbol{\xi}_K)\big),$$
which exhibits the fact that each function $h\in \mathcal{F}_K$ is fully characterized by its evaluations on the mesh nodes.
\subsection{Numerical solution to the penalized least-square problem}
The functions and integrals in (\ref{equ::variational_problem}) can be approximated using functions in the finite element space $\mathcal{F}_K,$ so that  problem (\ref{equ::variational_problem}) is approximated with its discrete counterpart: find $(\tilde{f}, \tilde{h})\in \mathcal{F}_K\times \mathcal{F}_K$ that satisfy (\ref{equ::variational_problem}) for any $(u,v)\in \mathcal{F}_K \times \mathcal{F}_K,$ where the integrals are now computed over the triangulation ${\Omega_\mathcal{T}}.$ Let $\boldsymbol{\Psi}$ be  the $n\times K$ matrix of the evaluations of the $K$ basis at the $n$ data locations $\mathbf{p}_1,\dots,\mathbf{p}_n,$
\begin{equation}
\label{eq::Psi}
\boldsymbol{\Psi}=\left[
\begin{array}{c}
\boldsymbol{\psi}^t (\mathbf{p}_1)\\
\vdots \\
\boldsymbol{\psi}^t(\mathbf{p}_n)
\end{array}
\right]
\end{equation}
and  consider the $K\times K$ matrices
$$
\mathbf{R_0}= \int_{\Omega_{\mathcal{T}}}(\boldsymbol{\psi}\ {\boldsymbol{\psi}}^t) \qquad  \quad  \mathbf{R_1}= \int_{\Omega_{\mathcal{T}}} (\nabla\boldsymbol{\psi})^t \nabla \boldsymbol{\psi}.
$$
Using this notation, for functions $\tilde{f},\tilde{h},u,v\in \mathcal{F}_K,$ we can write the integrals in (\ref{equ::variational_problem}) as follows:
$$
\int_{\Omega_\mathcal{T}}  (\nabla u)^t \nabla \tilde{h} =\mathbf{u}^t \mathbf{R}_1 \tilde{\mathbf{h}} ,\hspace{0.75cm}
\int_{\Omega_\mathcal{T}}  (\nabla \tilde{f})^t \nabla v
=\tilde{\mathbf{f}}^t\mathbf{R}_1 \mathbf{v},\hspace{0.75cm}
\int_{\Omega_\mathcal{T}} \tilde{h}\ v=\tilde{\mathbf{h}}^t\mathbf{R}_0 \mathbf{v},
$$
where $\tilde{\mathbf{f}}, \tilde{\mathbf{h}},\mathbf{u}$ and $\mathbf{v}$ are the vectors of the basis expansions of the functions  $\tilde{f},\tilde{h},u$ and $v$ respectively. The discrete counterpart of the problem  (\ref{equ::variational_problem}) thus reduces to solving a linear system, as stated in the following proposition.
\begin{proposition}
\label{prop:LinearSystem}
The discrete counterpart of (\ref{equ::variational_problem}) is given by the system
\begin{equation}
\label{equ::big_system_unconst_mesh}
\left[
\begin{array}{cc}
-\boldsymbol{\Psi}^t\ \mathbf{Q}\ \boldsymbol{\Psi} & \lambda \mathbf{R}_1 \\
\lambda \mathbf{R}_1 & \lambda \mathbf{R}_0
\end{array}
\right]
\left[
\begin{array}{c}
\tilde{\mathbf{f}}\\
\tilde{\mathbf{h}}
\end{array}
\right]
=
\left[
\begin{array}{c}
-\boldsymbol{\Psi}^t \mathbf{Q}\ \mathbf{z}\\
\mathbf{0}
\end{array}
\right],
\end{equation}
which admits a unique pair of solutions $\tilde{\mathbf{f}}, \tilde{\mathbf{h}}$ that are respectively the coefficients of the basis expansion of $\tilde{f}$ and $\tilde{h}$.
 \end{proposition}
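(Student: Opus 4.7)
The plan is a Galerkin reduction of the weak formulation (\ref{equ::variational_problem}) to the finite element pair $\mathcal{F}_K\times\mathcal{F}_K$. I would expand the unknowns and test functions in the nodal basis, rewrite every bilinear form in matrix coordinates using $\boldsymbol{\Psi}$, $\mathbf{R}_0$ and $\mathbf{R}_1$, and observe that the two resulting scalar identities must hold for every choice of test coefficient vectors. Cancelling the test vectors produces exactly the block system (\ref{equ::big_system_unconst_mesh}). Uniqueness then reduces to showing that the block matrix is non-singular, which I would obtain by an energy-type argument exploiting the symmetry of $\mathbf{R}_1$ and the positive-definiteness of the mass matrix $\mathbf{R}_0$.

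Concretely, setting $\tilde f=\tilde{\mathbf{f}}^{t}\boldsymbol{\psi}$, $\tilde h=\tilde{\mathbf{h}}^{t}\boldsymbol{\psi}$, $u=\mathbf{u}^{t}\boldsymbol{\psi}$, $v=\mathbf{v}^{t}\boldsymbol{\psi}$, the definition (\ref{eq::Psi}) of $\boldsymbol{\Psi}$ gives $\tilde{\mathbf{f}}_{n}=\boldsymbol{\Psi}\tilde{\mathbf{f}}$ and $\mathbf{u}_{n}=\boldsymbol{\Psi}\mathbf{u}$. Substituting these expansions into (\ref{equ::variational_problem}) and using the three integral identities stated just before the proposition turns the first equation into $\mathbf{u}^{t}\bigl(\boldsymbol{\Psi}^{t}\mathbf{Q}\boldsymbol{\Psi}\tilde{\mathbf{f}}-\lambda\mathbf{R}_{1}\tilde{\mathbf{h}}-\boldsymbol{\Psi}^{t}\mathbf{Q}\mathbf{z}\bigr)=0$ and, using the symmetry of $\mathbf{R}_1$, the second into $\mathbf{v}^{t}\bigl(\mathbf{R}_{1}\tilde{\mathbf{f}}+\mathbf{R}_{0}\tilde{\mathbf{h}}\bigr)=0$. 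Requiring these to hold for every $\mathbf{u},\mathbf{v}\in\mathbb{R}^{K}$ forces the two bracketed vectors to vanish; changing the sign of the first identity and multiplying the second by $\lambda$ yields precisely the two block rows of (\ref{equ::big_system_unconst_mesh}).

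For existence and uniqueness it then suffices to prove that the block matrix is invertible. I would test the homogeneous system $\mathbf{M}(\mathbf{f}^{*},\mathbf{h}^{*})^{t}=\mathbf{0}$ against $(\mathbf{f}^{*},\mathbf{h}^{*})^{t}$: left-multiplying the first row by $\mathbf{f}^{*t}$ and the second by $\mathbf{h}^{*t}$ and adding, the cross contributions $\lambda\mathbf{f}^{*t}\mathbf{R}_{1}\mathbf{h}^{*}$ and $\lambda\mathbf{h}^{*t}\mathbf{R}_{1}\mathbf{f}^{*}$ cancel by the symmetry of $\mathbf{R}_{1}$, leaving the energy identity $\mathbf{f}^{*t}\boldsymbol{\Psi}^{t}\mathbf{Q}\boldsymbol{\Psi}\mathbf{f}^{*}+\lambda\mathbf{h}^{*t}\mathbf{R}_{0}\mathbf{h}^{*}=0$. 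Recognising the first term as the weighted residual sum of squares that arises after concentrating out $\boldsymbol{\beta}$ in $\tilde{\mathcal{J}}_\lambda$, which is non-negative, and using that $\mathbf{R}_{0}$ is positive definite as the Gram matrix of the linearly independent basis $\{\psi_{j}\}$, I conclude $\mathbf{h}^{*}=\mathbf{0}$ and that $\boldsymbol{\Psi}\mathbf{f}^{*}$ lies in $\mathrm{col}(\mathbf{X})$. Feeding $\mathbf{h}^{*}=\mathbf{0}$ back into the second block equation gives $\mathbf{R}_{1}\mathbf{f}^{*}=\mathbf{0}$, so that the finite element function associated with $\mathbf{f}^{*}$ is constant on $\Omega_\mathcal{T}$; combining this with $\boldsymbol{\Psi}\mathbf{f}^{*}\in\mathrm{col}(\mathbf{X})$ and the partition-of-unity property $\boldsymbol{\Psi}\mathbf{1}=\mathbf{1}_{n}$ of linear finite elements, the full column rank of $\mathbf{X}$ forces $\mathbf{f}^{*}=\mathbf{0}$.

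I expect the last identifiability step to be the subtle one: $\mathbf{R}_{1}$ has the constant mode in its kernel, so the constant level of $f^{*}$ cannot be detected by the penalty and must instead be pinned down by the data-fidelity block through $\mathbf{Q}$. The remainder is essentially bookkeeping, transferring the continuous uniqueness of Proposition \ref{prop::weight_min_prob} to the finite-dimensional subspace, and requires no new analytical ingredient beyond the Galerkin substitution.
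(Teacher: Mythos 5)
Your Galerkin reduction of (\ref{equ::variational_problem}) to $\mathcal{F}_K\times\mathcal{F}_K$ is exactly the route the paper takes (it is carried out in the text preceding the proposition rather than in the proof itself), and the two block rows come out correctly. For uniqueness you take a genuinely different path: the paper's proof is a one-line Schur-complement argument (invertibility of $\mathbf{R}_0$ lets one eliminate $\tilde{\mathbf{h}}=-\mathbf{R}_0^{-1}\mathbf{R}_1\tilde{\mathbf{f}}$, and positive definiteness of $\boldsymbol{\Psi}^t\mathbf{Q}\boldsymbol{\Psi}+\lambda\mathbf{R}_1\mathbf{R}_0^{-1}\mathbf{R}_1$ is simply asserted), whereas you test the homogeneous system against its own solution. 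The content is essentially equivalent, and your version has the merit of exposing where each hypothesis enters. Two small slips: with the signs of (\ref{equ::big_system_unconst_mesh}) you must \emph{subtract} the tested second row from the tested first row for the cross terms $\lambda\mathbf{f}^{*t}\mathbf{R}_1\mathbf{h}^{*}$ and $\lambda\mathbf{h}^{*t}\mathbf{R}_1\mathbf{f}^{*}$ to cancel (adding them doubles the cross term), though the energy identity you then write is the correct one; and the non-negativity of the first term really rests on reading the quadratic form as the weighted residual sum of squares $\mathbf{f}^{*t}\boldsymbol{\Psi}^t\mathbf{W}\mathbf{Q}\boldsymbol{\Psi}\mathbf{f}^{*}$, since $\mathbf{Q}$ itself is an oblique projector and $\mathbf{v}^t\mathbf{Q}\mathbf{v}$ need not be non-negative — a notational looseness you inherit from the paper.

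The genuine gap is the final identifiability step. From $\mathbf{h}^{*}=\mathbf{0}$ and $\mathbf{R}_1\mathbf{f}^{*}=\mathbf{0}$ you correctly conclude that $f^{*}$ is constant, say $f^{*}\equiv c$, so that $\boldsymbol{\Psi}\mathbf{f}^{*}=c\mathbf{1}_n$ and, from the data-fidelity term, $c\mathbf{1}_n\in\mathrm{col}(\mathbf{X})$. But ``$\mathbf{X}$ has full column rank'' does not force $c=0$: full rank concerns the linear independence of the columns of $\mathbf{X}$ and says nothing about whether $\mathbf{1}_n$ lies in their span. The implication you need is the identifiability condition $\mathbf{1}_n\notin\mathrm{col}(\mathbf{X})$ (no intercept, and no linear combination of covariates reproducing the constant vector). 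If instead $\mathbf{1}_n\in\mathrm{col}(\mathbf{X})$, then $(\mathbf{f}^{*},\mathbf{h}^{*})=(\mathbf{1}_K,\mathbf{0})$ is a nontrivial solution of the homogeneous system and the block matrix is genuinely singular: a constant shift of $f$ is absorbed by the parametric part. To be fair, the paper's own proof silently makes the same assumption — $\mathbf{1}_K$ is a null vector of the claimed positive definite Schur complement whenever $\mathbf{1}_n\in\mathrm{col}(\mathbf{X})$ — so your argument has the virtue of surfacing exactly the hypothesis the paper leaves implicit; but as written the last implication is a non sequitur and the extra assumption must be stated.
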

 \begin{proof}
Uniqueness of the  solution to (\ref{equ::big_system_unconst_mesh}) is ensured by the positive definiteness of the matrices $\mathbf{R}_0$  and $\left(\boldsymbol{\Psi}^t\mathbf{Q}\boldsymbol{\Psi}+\lambda \mathbf{R}_1\mathbf{R}_0^{-1}\mathbf{R}_1\right).$
 \end{proof}

Let  $\boldsymbol{P}= \mathbf{R}_1 \mathbf{R}_0^{-1} \mathbf{R}_1$ and $ \mathbf{S}=\boldsymbol{\Psi}\left(\boldsymbol{\Psi}^t\ \mathbf{Q}\ \boldsymbol{\Psi} +\lambda   \boldsymbol{P}  \right)^{-1} \boldsymbol{\Psi}^t \mathbf{Q}.$
Then, using the functional version of the PIRLS algorithm, and thanks to Propositions \ref{prop::weight_min_prob} and \ref{prop:LinearSystem}, we obtain  the following expressions for the maximizers $\hat{\boldsymbol{\beta}}$  and $\hat{f}$ of the penalized log-likelihood (\ref{equ::pen_like}):
$$\hat{\boldsymbol{\beta}}=(\mathbf{X}^t\mathbf{W}\mathbf{X})^{-1}\mathbf{X}^t\mathbf{W}(\mathbf{I}-\mathbf{S})\mathbf{z},$$
$\hat{f}$ is identified by the vector
\begin{equation}\label{eq:hat_f}
\hat{\mathbf{f}}= \left(\boldsymbol{\Psi}^t\ \mathbf{Q}\ \boldsymbol{\Psi} +\lambda \boldsymbol{P}\right)^{-1} \boldsymbol{\Psi}^t \mathbf{Q}\mathbf{z},
\end{equation}
and the vector of evaluations of  $\hat{f}$ at the $n$ data locations is given by
$$ \hat{\mathbf{f}}_n=\boldsymbol{\Psi}\hat{\mathbf{f}} = \boldsymbol{\Psi}\left(\boldsymbol{\Psi}^t\ \mathbf{Q}\ \boldsymbol{\Psi} +\lambda \boldsymbol{P}\right)^{-1} \boldsymbol{\Psi}^t \mathbf{Q}\mathbf{z}= \mathbf{S}\mathbf{z},$$
where the vector of pseudo-data $\mathbf{z}$, and the matrices $ \mathbf{W}$, $\mathbf{Q}$ and  $\mathbf{S}$   are those obtained at the convergence of the  PIRLS algorithm.%Both estimators depends linearly on the pseudo-data $\mathbf{z}$.

The positive definite
 matrix $\boldsymbol{P}$ represents the discretization of the penalty
term in (\ref{equ::pen_like}) and (\ref{eq:estimation_problem}). Notice that, thanks to the variational formulation (\ref{equ::variational_problem})
of the estimation problem, this penalty matrix does not involve
the computation of second-order derivatives. \cite{Azzimonti2} shows that, in the finite
element space used to discretize the problem, $\boldsymbol{P}$  is in fact
equivalent to the penalty matrix that would be obtained as
direct discretization of the penalty term in (\ref{equ::pen_like}) and (\ref{eq:estimation_problem}), and involving the
computation of second-order derivatives.

%Proposition \ref{prop:LinearSystem} implies  the following expression for the maximizers $\hat{\mathbf{f}}$:
%$$\hat{\mathbf{f}}= \left(\boldsymbol{\Psi}^t\ \mathbf{Q}\ \boldsymbol{\Psi} +\lambda \mathbf{R}_1 \mathbf{R}_0^{-1} \mathbf{R}_1\right)^{-1} \boldsymbol{\Psi}^t \mathbf{Q}\mathbf{z} = \mathbf{P}\mathbf{z}$$
%and
%$$\hat{\boldsymbol{\beta}} = \mathbf{H}(\mathbf{z}-\hat{\mathbf{f}}_n) = \mathbf{H}(\mathbf{I}-\boldsymbol{\Psi}\mathbf{P})\mathbf{z}, $$
%where the vector of pseudo-data $\mathbf{z}$, and the matrices $ \mathbf{H}$ and $\mathbf{Q}$  are obtained at the convergence of the  PIRLS algorithm. Both estimators depends linearly on the pseudo-data $\mathbf{z}$.

We can define the hat (or influence) matrix $\mathbf{M}$ for the generalized additive model  \citep[see][p.\ 156]{Hastie90} as the matrix satisfying:
\begin{equation}
\nonumber
\hat{\boldsymbol{\boldsymbol{\theta}}}= \mathbf{M}\mathbf{z},
\end{equation}
where $\hat{\boldsymbol{\boldsymbol{\theta}}}$ and $\mathbf{z}$ are respectively the canonical parameter and the pseudo-data at the convergence of the  PIRLS algorithm. In this case, the hat matrix is given by:
$$\mathbf{M} =(\mathbf{H}+\mathbf{Q}\mathbf{S}).$$
The trace of the influence matrix can be used as measure of the equivalent degrees of freedom of the model \citep{Buja89}.
Finally, the fitted mean is given by:
$$\hat{\boldsymbol{\mu}}=g^{-1}(\hat{\boldsymbol{\theta}})= g^{-1}(\mathbf{M}\mathbf{z}).$$

\subsection{Estimation of the scale parameter and selection of the smoothing parameter}
Any distribution of the exponential family is described by two parameters, the mean $\mu$ and the scale parameter $\phi$. The estimation of the mean does not require the estimation of the scale parameter but only of the canonical parameter. To estimate the scale parameter, we must estimate the mean for all the observations. A classical estimator of the scale parameter is \citep[see, e.g.,][]{Wood06}:
\begin{equation}
\label{equ::phi_est}
\hat{\phi}=\frac{\|\mathbf{V}^{-1/2}(\mathbf{y}-\hat{\boldsymbol{\mu})}\|^2}{n-\tr(\mathbf{M})},
\end{equation}
where $\hat{\boldsymbol{\mu}}$ is the estimated mean at the convergence,  $\mathbf{V}$ is the $n \times n$ diagonal matrix with entries $V(\hat{\mu}_1),\ldots,V(\hat{\mu}_n)$ and $\mathbf{M}$ is the hat matrix.
We may choose the smoothing parameter $\lambda$  by minimizing the Generalized Cross Validation (GCV)  criterion \citep{Craven78}:
\begin{equation}
\label{equ::def_GCV}
\text{GCV}(\lambda)= \frac{n \|\mathbf{y} - \boldsymbol{\mu}(\hat{\boldsymbol{\beta}}, \hat{\mathbf{f}})(\lambda)\| ^2}{[n-\gamma\ \tr [\mathbf{M}(\lambda) ] ]^2},
\end{equation}
where $\boldsymbol{\mu}(\hat{\boldsymbol{\beta}}, \hat{\mathbf{f}})(\lambda)$ is the fitted mean at the convergence of the algorithm, for a fixed $\lambda$, and $\gamma$ is a constant factor usually set equal  to $1$. In some cases, the GCV optimum leads to overfitting, so it can be useful to give more weight to the equivalent degrees of  freedom of the model setting $\gamma \geq 1$.

As discussed extensively in \cite{Wood06}, two alternative schemes can be adopted for the selection of the smoothing parameter when using a  PIRLS algorithm. The parameter estimation can be done as a step of the PIRLS algorithm, leading to an update of the value of $\lambda$ at each iteration of the algorithm; alternatively, the update of the smoothing parameter can be done at the convergence of the algorithm. These two different approaches are refereed to as \emph{performance iteration} and \emph{outer iteration} respectively. In this work we shall use an outer iteration scheme.

%%%%%%%%%%%%%%%%%%%%%%%%%%%%%%%%%%%%%%%%%%%%%%%%%%%%%%%%%%%%%%%%%%%%%%%%%%%%%%%%%%%%%%%%%%%%%%%%%%%%%%%%%%%%%%%%%%%%%%%%%%%%%%%%%%%%%%%%%%%%%%%%%
%%%%%%%%%%%%%%%%%%%%%%%%%%%%%%%%%%%%%%%%%%%%%%%%%%%%%%%%%%%%%%%%%%%%%%%%%%%%%%%%%%%%%%%%%%%%%%%%%%%%%%%%%%%%%%%%%%%%%%%%%%%%%%%%%%%%%%%%%%%%%%%%%
%%%%%%%%%%%%%%%%%%%%%%%%%%%%%%%%%%%%%%%%%%%%%%%%%%%%%%%%%%%%%%%%%%%%%%%%%%%%%%%%%%%%%%%%%%%%%%%%%%%%%%%%%%%%%%%%%%%%%%%%%%%%%%%%%%%%%%%%%%%%%%%%%

\section{Model version for areal data}
\label{sec::areal_data_model}

The proposed model can also be specified for the case of areal observations. Specifically, let   $D_1, \dots, D_n$ be disjoints subregions of  the domain $\Omega$. Over  each  subdomain $D_i,$ we observe the realization $y_i$ of a real variable of interest $Y_i$ and a vector of covariate information $\mathbf{x}_i\in\mathbb{R}^q$. We assume $Y_1,\dots,Y_n$ are independent, with $Y_i$  having a distribution within the exponential family, with mean $\mu_i$ and common scale paramenter $\phi$. We  now model $\mu_i$ by
$$g(\mu_i)= \mathbf{x}_i^t \boldsymbol{\beta} + \int_{D_i} f,$$
where the integral of the spatial field $f$ over the subdomain $D_i$  replace the pointwise evaluation of the field considered in model (\ref{equ::model_intro}). We estimate  $\boldsymbol{\beta} \in \mathbb{R}^q$   and the spatial field $f\in\mathcal{F}$ by maximizing a penalized log-likelihood functional in (\ref{equ::pen_like}).
If we  redefine $\mathbf{f}_n$ as $\mathbf{f}_n = ( \int_{D_1} f , \dots, \int_{D_n} f)^t,$ i.e., as being the vector of integrals of the spatial field  over the subdomains, and we redefine the $n\times K $ matrix $\boldsymbol{\Psi}$ in (\ref{eq::Psi}) as the matrix with entry $(i,j)$ given by $\int_{D_i}\psi_j,$ then the derivation of the functional PIRLS algorithm, the implementation of the model and its properties follows as described in the previous sections for the geostatistical data case.

Appendix \ref{app::linear_operator} presents in fact a more general formulation of the model proposed in this work, that comprehends as special cases the model version for geostatistical data and the one for areal data. The results detailed in the previous sections  and  in Appendices \ref{app::spatial-variation}, \ref{app::PIRLS} and \ref{app::proof_1}, for the case of geostatistical data, carry over to this more general model, and thus also to the areal data case.

\section{Simulation studies}
\label{sec::sim}

\subsection{Geostatistical data}
\label{sec::sim-geostat}

In order to illustrate the good performances of the proposed model, we show some simulations on a horseshoe domain \citep{Ramsay02, Wood08} and using the spatial test field shown in the top left panel of Figure \ref{gr:sim_summary_1}, that is detailed in Appendix \ref{app::test-fields}. We consider an outcome with a gamma distribution; in this case, we need to estimate both the  canonical and the scale parameter. We generate $n=200$ data locations uniformly on the horseshoe. We then consider these locations as fixed. For each sampled data location $\mathbf{p}_i$, we generate two independent covariates $x_{1i}$ and $x_{2i}$ having a translated beta distribution; specifically we set $x_{1i}=1+u_{1i}$ and $x_{2i}=1+u_{2i},$ where $u_{1i}\stackrel{iid}{\sim} \text{Beta}(1.5,2) $ and $u_{2i}\stackrel{iid}{\sim} \text{Beta}(3,2)$. We set $\beta_1=-\frac{2}{5}$ and $\beta_2=\frac{3}{10}$.
For each sampled data location $\mathbf{p}_i$, we then generate independent gamma random variables, with mean $\mu_i=-(\mathbf{x}_i^t \boldsymbol{\beta} + f(\mathbf{p}_i))^{-1}$ and common scale parameter $\phi$. We repeat this simulation $M=100$ times.

The top right panel of Figure \ref{gr:sim_summary_1} shows the sampled data in a simulation repetition, with the size of the point marker proportional to data values. The bottom center and right panels of the same Figure displays the  scatter plots  of the response  versus the two covariates; from these plots is not apparent that the two covariates are significant in explaining the response.

\begin{figure}[tbhp]%[ht]
\centering
\includegraphics[scale=0.5]{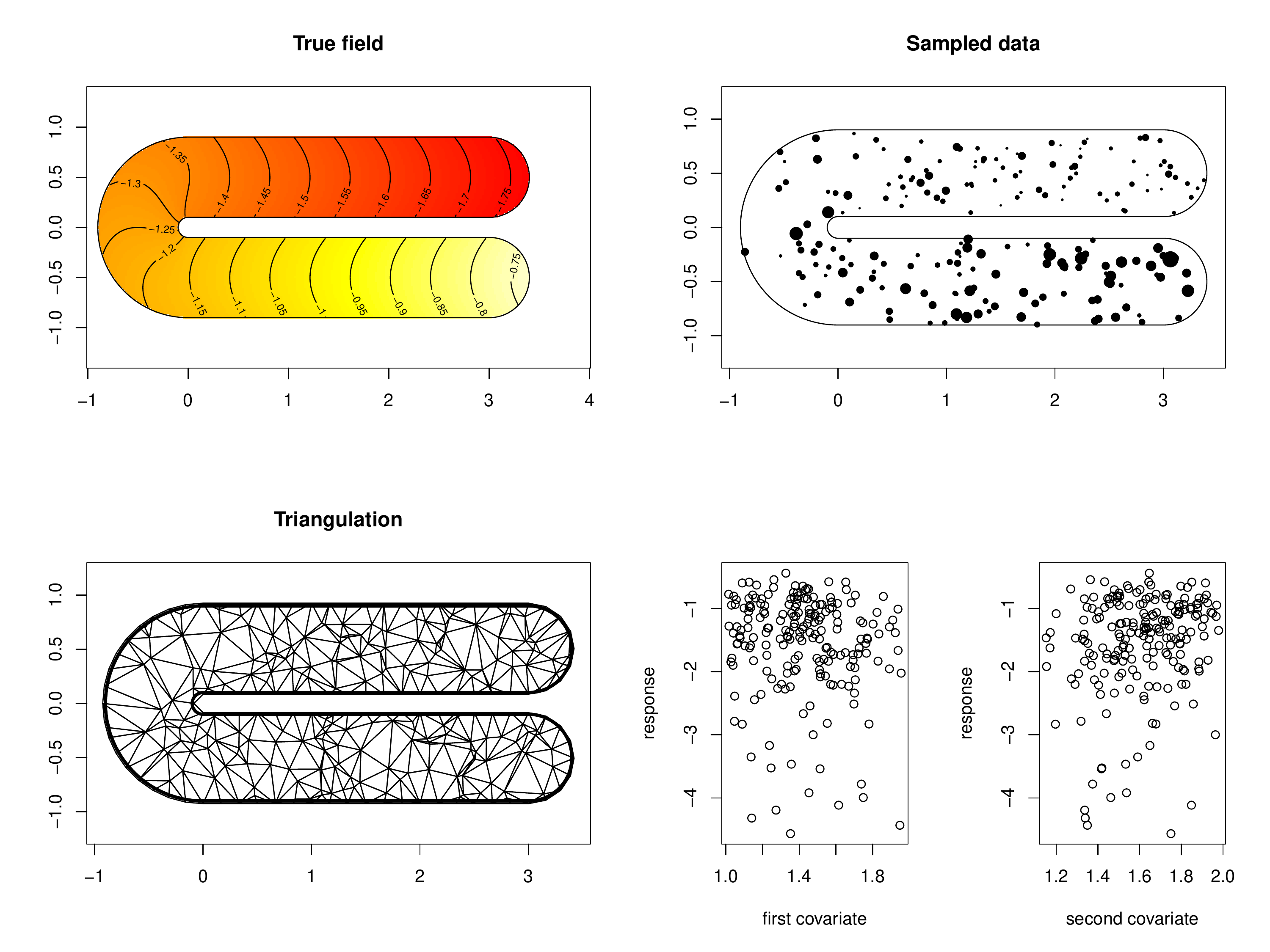}
\caption{\label{gr:sim_summary_1}Simulation with geostatistical data. Top left: the true field $f$ to be estimated. Top right: the data sampled in the first simulation repetition; the marker size is proportional to data values; the data locations are considered as fixed. Bottom left: the triangulation used to obtain the GSR-PDE estimate; this is a contrained Delaunay triangulation of the locations of the data shown in the top right panel. Bottom center and bottom right: scatter plots of the response versus the two covariates, for the first simulation repetition.}
\end{figure}

\begin{figure}[tbhp]%[ht]
\centering
\includegraphics[scale=0.5]{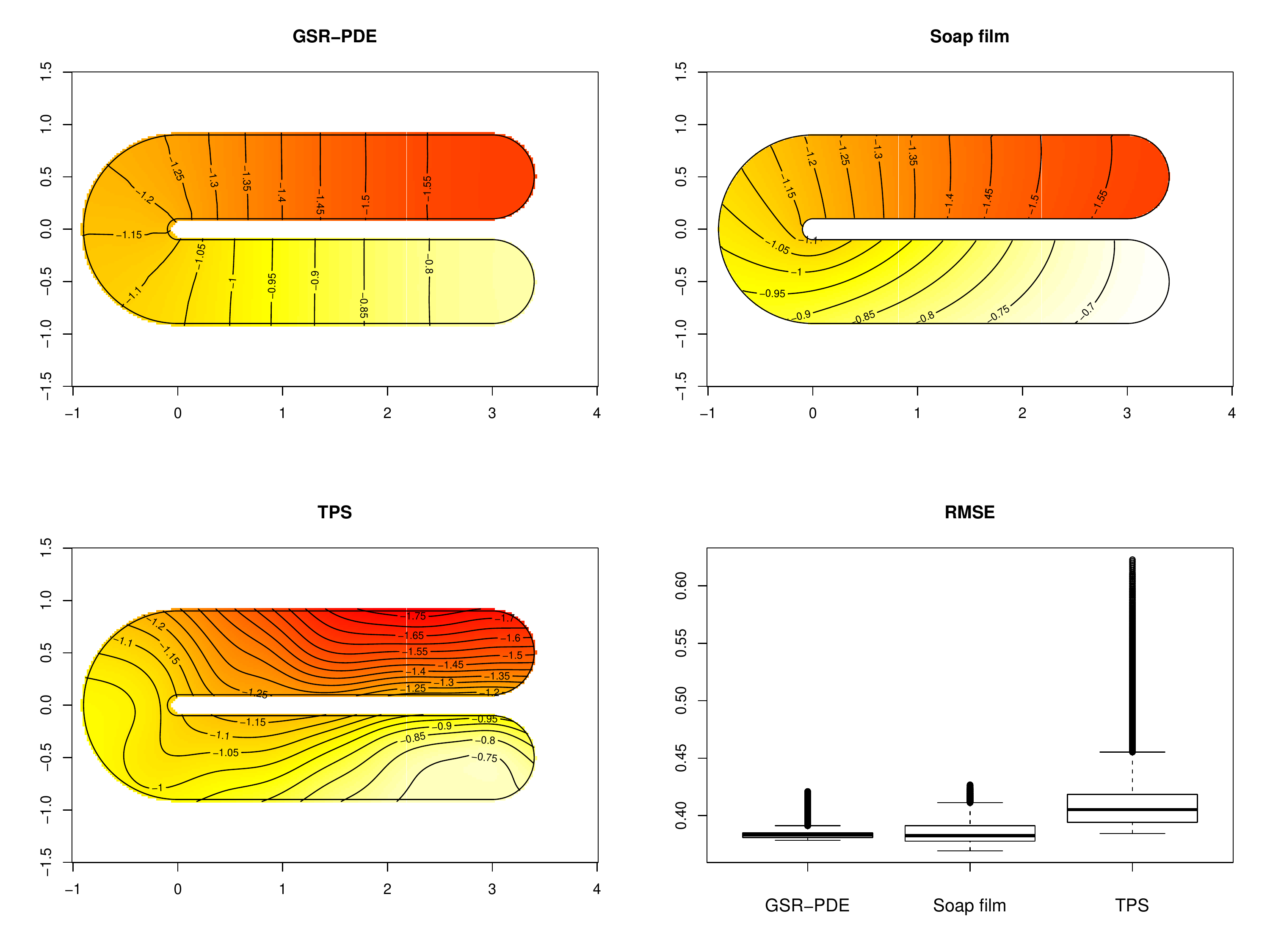}
\caption{\label{gr:sim_summary_2}Simulation with  geostatistical data. Estimates of the spatial field obtained in the first simulation repetition by GSR-PDE (top left), Soap film  (top right) and TPS (bottom left). On the bottom right, the boxplot of the spatial distributions of the RMSE of the three spatial field estimators over the $M=100$ simulation repetitions, computed on a fine grid of points over the horseshoe domain.}
\end{figure}

We compare our method to soap film smoothing \citep{Wood08} and to the thin-plate splines \citep{Duchon77, Wahba90}, implemented using the \textbf{R} package \texttt{mgcv} \citep{mgcv}. Soap film smoothing  uses 72 degrees of freedom, as in the implementation  given in the reference manual of the \texttt{mgcv} package (see function \texttt{Predict.matrix.soap.film}). Thin-plate splines (TPS)  uses the default settings with $40$ degrees of freedom. For  the proposed GSR-PDE method, we use linear finite elements with a triangular mesh that is a constrained Delaunay triangulation of the $n$ data locations; see Figure \ref{gr:sim_summary_1}, bottom left panel. To ensure that the  comparison is fair, at each simulation repetition we select the smoothing parameter for each of the considered methods  optimizing the GCV criterion in an outer iteration scheme.

The root mean squared error (RMSE),  over the $M=100$ simulation repetitions, of the estimators of ${\boldsymbol{\beta}}$ are comparable accross the three considered methods (the RMSE of $\hat{\beta}_1$ are:  0.151 for GRS-PDE,  0.150 for Soap and 0.159 for TPS; the RMSE of $\hat{\beta}_2$ are:  0.178 for GRS-PDE,  0.174 for Soap and 0.178 for TPS). The bottom right panel of Figure \ref{gr:sim_summary_2} shows the  boxplots of the spatial distribution of the RMSE, over the $M=100$ simulation repetitions, of the estimators of the spatial field $\hat{f};$ specifically, we consider a fine grid of  points $\mathbf{p}$, of step $0.02$ in the $x$-direction and $0.01$ in the $y$-direction, and for each of these points $\mathbf{p}$ we
 compute
$$\text{RMSE}(\mathbf{p})= \sqrt{\frac{1}{M}\sum_{j=1}^M \left(\hat{f}(\mathbf{p})- f(\mathbf{p})\right)^2}.$$
%of the estimator on a fine spatial grid over the horseshoe domain, with a grid step of  $0.02$ in the $x$-direction and of $0.01$ in the $y$-direction.
These boxplots show that the proposed GRS-PDE method and Soap film smoothing provide significantly better estimates than thin-plate splines. The reason of this comparative advantage is  highlighted  by the spatial field estimates returned by the three methods in the first simulation replicate, shown in the first three panels of  Figure \ref{gr:sim_summary_2}. The thin-plate spline technique is blind to the shape of the domain and smooths across the internal boundaries: the higher values of the field in one side of the horseshoe domain are smoothed with the lower values of the field in the other side of the domain, returning an highly biased estimate. The proposed GRS-PDE method and soap film smoothing do not suffer this problem,  accurately complying with the domain geometry. The proposed GRS-PDE method is the best technique in terms of RMSE of the spatial field estimator.

%%%%%%%%%%%%%%%%%%%%%%%%%%%%%%%%%%%%%%%%%%%%%%%%%%%%%%%%%%%%%%%%%%%%%%%%%%%%%%%%%%%%%%%%%%%%%%%%%%%%%%%%%%%%%%%%%%%%%%%%%%%%%%%%%%%%%%%%%%%%%%%%%
%%%%%%%%%%%%%%%%%%%%%%%%%%%%%%%%%%%%%%%%%%%%%%%%%%%%%%%%%%%%%%%%%%%%%%%%%%%%%%%%%%%%%%%%%%%%%%%%%%%%%%%%%%%%%%%%%%%%%%%%%%%%%%%%%%%%%%%%%%%%%%%%%
%%%%%%%%%%%%%%%%%%%%%%%%%%%%%%%%%%%%%%%%%%%%%%%%%%%%%%%%%%%%%%%%%%%%%%%%%%%%%%%%%%%%%%%%%%%%%%%%%%%%%%%%%%%%%%%%%%%%%%%%%%%%%%%%%%%%%%%%%%%%%%%%%

\subsection{Areal data}
\label{sec::sim-areal}

\begin{figure}[tbhp]
\centering
\includegraphics[width=1\textwidth]{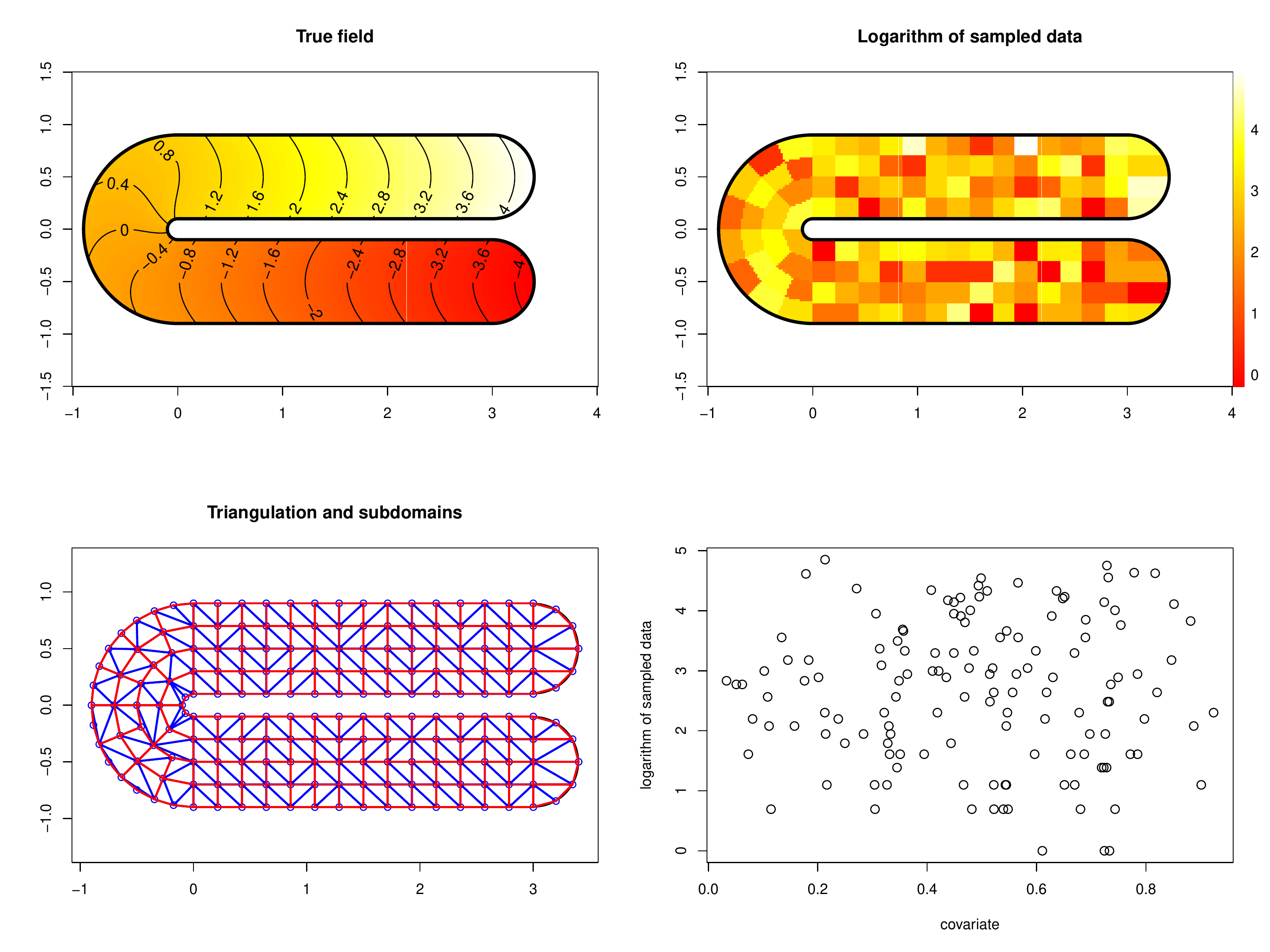}
\caption{\label{gr::summary_areal_1}Simulation with areal data. Top left: the true field to be estimated. Top right: the logarithm of sampled data. Bottom left: the triangulation used to obtain the GSR-PDE estimate, with the borders of the subdomains highlighted in red. Bottom right: scatter plots of the response versus the covariate.}
\end{figure}

We now present a simulation with areal data.
We consider the test function of the horseshoe domain displayed in the left panel of Figure \ref{gr::summary_areal_1} and detailed in Appendix \ref{app::test-fields}. The bottom left panel of the same figure shows in red the borders of the $ n= 142$ sub-domains  $D_i$ considered.  Indipendently over each subdomain $D_i,$ we generate a covariate $x_{i}$ having beta distribution: $x_{i}\stackrel{iid}{\sim} \text{Beta}(2,2)$.
We set $\beta=5$. Over each subdomain $D_i,$ we the generate independent  Poisson random variables with mean $\mu_i$, where  $\log(\mu_i) = \mathbf{x}_i^t \boldsymbol{\beta} +\int_{D_i}f.$ Notice that in this case the scale parameter is 1 and does not need to be estimated.
%This is equivalent to sample a Poisson process with intensity:
%$$\lambda(x)=  \frac{1}{|D|} \sum_{i=1}^{361}\left(\mathbf{x}_i^t \boldsymbol{\beta }\right)\mathbf{1}_{ \{x\in D_i \} }+f(x),$$
%where $|D|$ is the area of the domains $D_i$.
The simulation is repeated $M=100$ times.

The top right panel of Figure \ref{gr::summary_areal_1} shows the sampled data in the first simulation repetition (in logarithmic scale); the bottom right panel of the same Figure displays a scatter plot  of the response (in logarithmic scale) versus the covariate.

%\begin{figure}[tbhp]
%\centering
%\includegraphics[width=1.1\textwidth]{plots_article/simulazione_areale/log_mean_first_sim_region_bis}
%\caption{\label{gr::summary_areal_domain_level}Simulation with areal data. Left: logarithm of the true mean over each subdomain. Right: logarithm of the estimated mean over each subdomain in the first simulation repetition.}
%\end{figure}

%
%\begin{figure}[tbhp]
%\centering
%\includegraphics[width=0.4\textwidth]{plots_article/simulazione_areale/scatteplot_tru_expected_mean_bis}
%\caption{\label{gr::scatter}Simulation with areal data. Scatter plot of the the estimated mean  vs  the true mean over each subdomain, for  the first simulation repetition.}
%\end{figure}

 \begin{figure}[tbhp]
\centering
\includegraphics[width=0.4\textwidth]{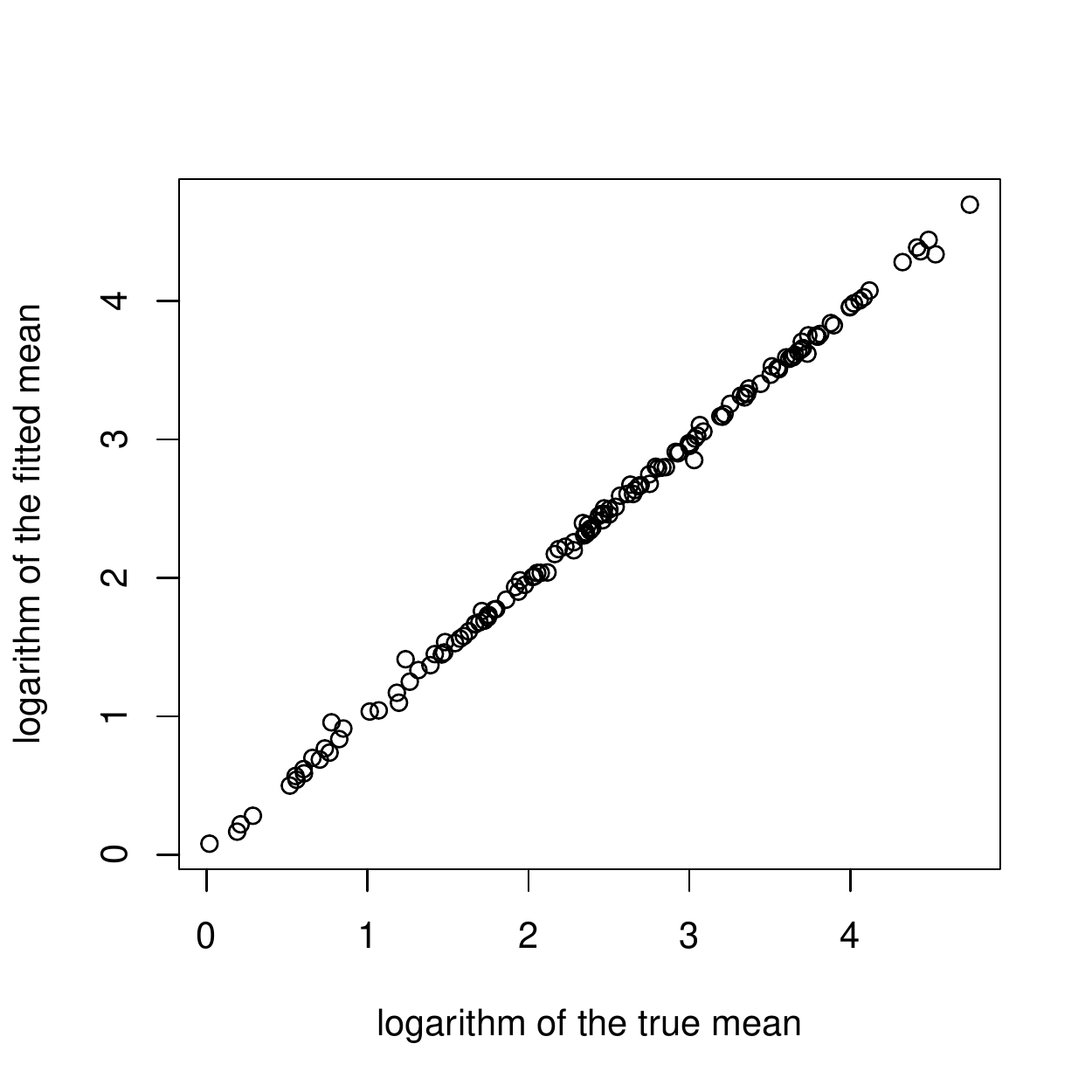}
\includegraphics[width=0.4\textwidth]{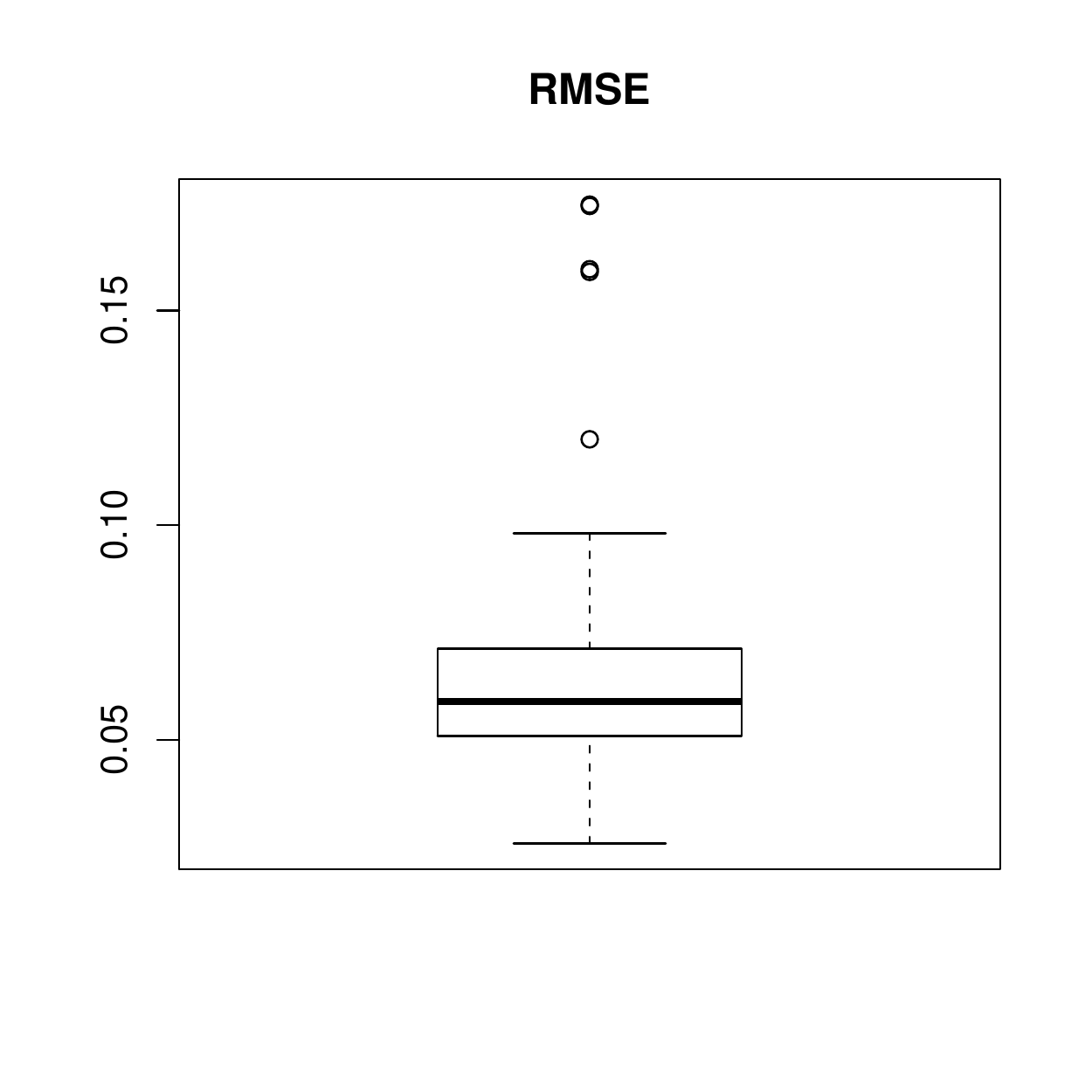}
\caption{\label{gr::scatter}Simulation with areal data. Left: scatter plot of the the estimated mean  vs  the true mean over each subdomain, for  the first simulation repetition. Right: boxplot of the spatial distribution of the RMSE, over the $M=100$ simulation repetitions, of the non-parametric part of the model, computed over each subdomain as detailed in eq.\  (\ref{eq:RMSE-areal}).}
\end{figure}

\begin{figure}[tbhp]
\centering
\includegraphics[width=1\textwidth]{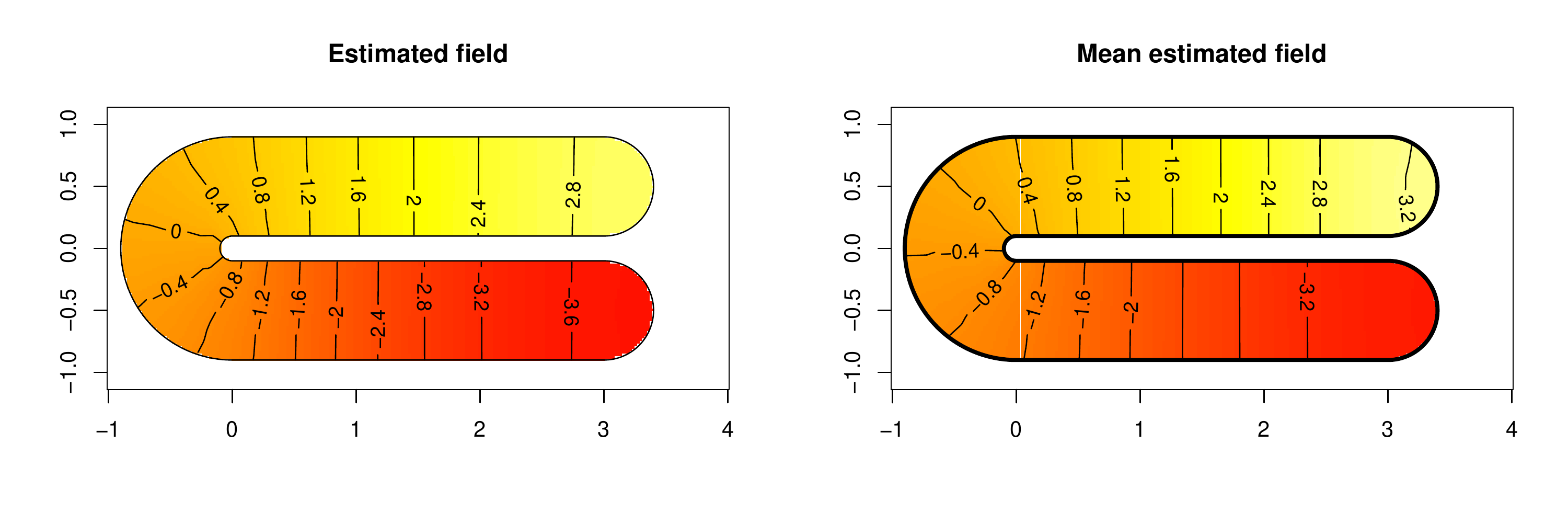}
\caption{\label{gr::summary_areal_2}Simulation with areal data: Left: estimated spatial field in the first simulation repetition. Right: mean of the estimated spatial fields over the $M=100$ simulation repetitions.}
\end{figure}

The sample mean of the estimated $\hat{\beta}$ coefficient over the $M=100$ simulation repetitions is  5.021 (true value: 5) with a standard deviation of  0.073 and a RMSE of 0.076.
 The left panel of Figure \ref{gr::scatter} compares the log  estimated mean over each subdomain in the first simulation repetition, and the true one, showing the very good performances of the method. The right panel of the same figure displays the boxplot of the spatial distribution of the RMSE, over the $M=100$ simulation repetitions, of the non-parametric part of the model, computed over each sub-domain as
\begin{equation}\label{eq:RMSE-areal}
  \text{RMSE}(D_i)= \sqrt{\frac{1}{M}\sum_{j=1}^M \left(\int_{D_i}\hat{f}- \int_{D_i}f\right)^2}
\end{equation}
 for $i=1,\ldots, n=142.$
 Finally, Figure \ref{gr::summary_areal_2} shows the estimated field $\hat{f}$ in the first simulation repetition and the sample mean of the estimated spatial field over the 100 simulation repetitions. These highlight that the method is able to recover  quite well the pointwise values of the field, even though using only areal observations. The estimates of  $\beta$  and of $f$ appear to have a negligible bias and a small variance.

%%%%%%%%%%%%%%%%%%%%%%%%%%%%%%%%%%%%%%%%%%%%%%%%%%%%%%%%%%%%%%%%%%%%%%%%%%%%%%%%%%%%%%%%%%%%%%%%%%%%%%%%%%%%%%%%%%%%%%%%%%%%%%%%%%%%%%%%%%%%%%%%%
%%%%%%%%%%%%%%%%%%%%%%%%%%%%%%%%%%%%%%%%%%%%%%%%%%%%%%%%%%%%%%%%%%%%%%%%%%%%%%%%%%%%%%%%%%%%%%%%%%%%%%%%%%%%%%%%%%%%%%%%%%%%%%%%%%%%%%%%%%%%%%%%%
%%%%%%%%%%%%%%%%%%%%%%%%%%%%%%%%%%%%%%%%%%%%%%%%%%%%%%%%%%%%%%%%%%%%%%%%%%%%%%%%%%%%%%%%%%%%%%%%%%%%%%%%%%%%%%%%%%%%%%%%%%%%%%%%%%%%%%%%%%%%%%%%%

\section{Application: Crimes in Portland}
\label{sec::application}

The city of Portland, Oregon (USA) has made publicly available a data set about all crimes committed in the city in 2012\footnote{Portland crime data: \url{http://www.civicapps.org/datasets}}. We would like  to study the  criminality over this city, taking into account auxiliary census information\footnote{Census Bureau Data:  \url{http://www.portlandoregon.gov/oni/28387}}. The census information (year 2010) is aggregated at the level of the neighborhoods. Here in particular we consider as covariate the total population of each neighborhood. The map of Portland in  the left panel of Figure \ref{gr:port_data_and_tri} highlights in blue the borders of these neighborhoods, together  with the locations of crimes. For computational simplicity, the triangulation of the city territory shown in the right panel of the same figure has been constructed in a way to comply with the borders of the neighborhoods. Since the covariate is only available at the level of neighborhoods, we decided to aggregate also the crimes, thus considering as response variable the total crime count over each neighborhood.
%The choice of the smoothing parameter was done via a GCV criterion and we use Neumann boundary conditions. As auxiliary variable, we use the total population of these neighborhoods in 2010. Portland is divided in two different parts by the Willamette River. The domain considered includes two bridges, hence showing the ability of the proposed method to deal with complex boundaries. The city center is located at the River side at the level of the bridges. In this precise case, it is meaningful since the bridges link the city center (downtown), a usually criminogenic neighborhood, to the other side.
We model these data as an inhomogeneous Poisson process. Specifically,  the total crime counts $Y_1,\ldots,Y_n$ over the $n=98$ neighborhoods are modeled as independent Poisson random variables with mean  $\mu_i$, where
$$\log(\mu_i)=\log(\texttt{pop}_i)\beta + \int_{D_i} f\ dx,$$
and $\texttt{pop}_i$ denotes the total population over the i-th neighborhood.

We select the smoothing parameter via the GCV criterion. We get $\hat{\beta}=  0.381$, confirming that the population density contributes positively to the crimes count in a given neighborhood. Figure \ref{gr::crime_fit_neigh} compares, in a logarithm scale, the  observed and estimated crime densities over each neighborhood, where the density is computed as the total crime count over the neighborhood divided by the area of the neighborhood. These are also compared in the scatter plot in the left panel of Figure \ref{gr::crime_dist_Port}, highlighting the goodness of fit of the model. Finally, the right panel of Figure \ref{gr::crime_dist_Port} shows the estimated spatial field. When the estimated field is close to zero, the crimes count is well described by the parametric part of the model, namely as a rate of the number of residents.
 The highest levels of the estimated spatial field are located dowtown; this is likely due to the high number of people who come to the city center for work or  leisure during the day and in the evening. The estimate complies with the complex shape of the domain.

\begin{figure}[hbtp]
\centering
\includegraphics[scale=.54]{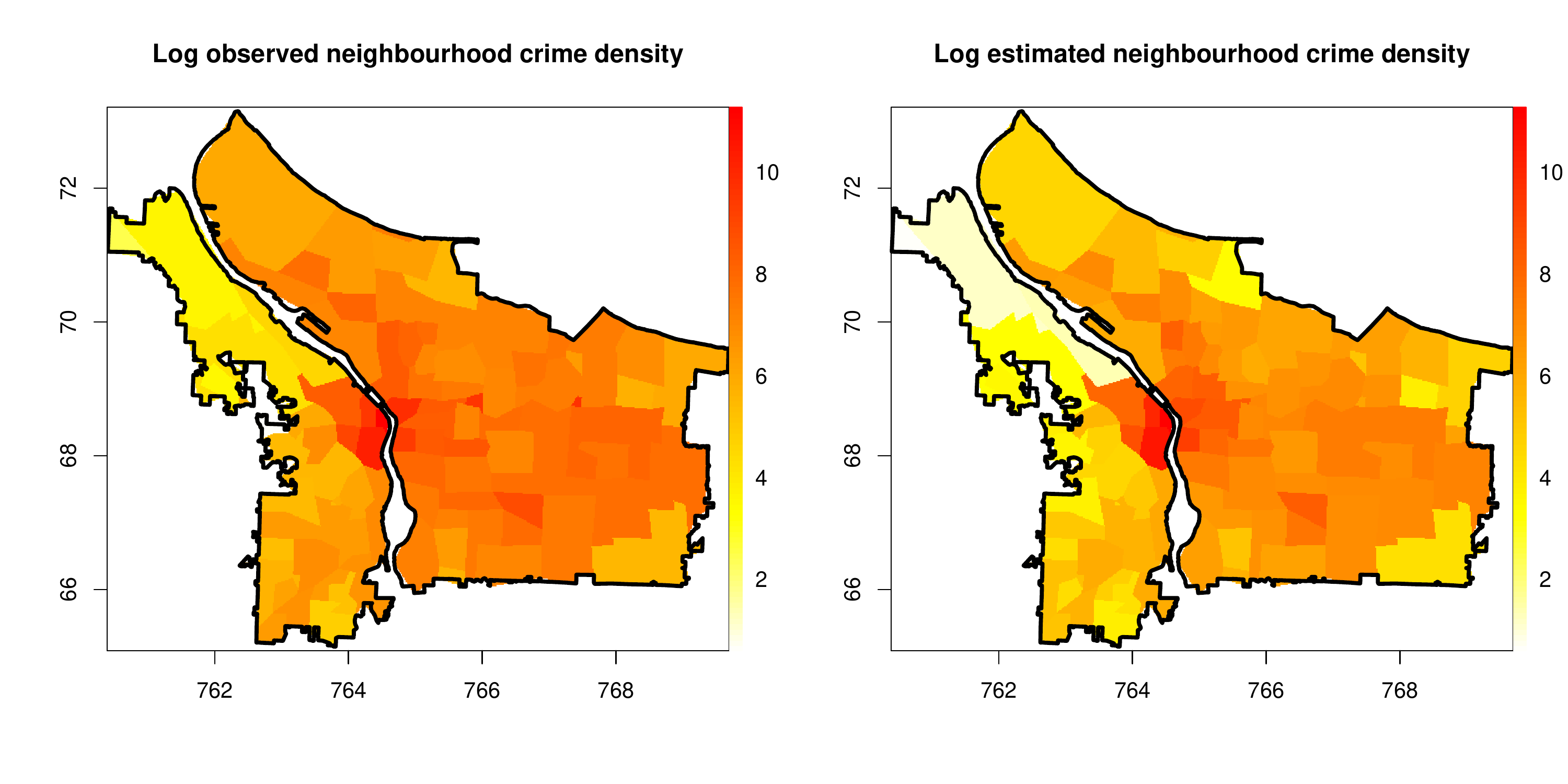}
\caption{\label{gr::crime_fit_neigh} Log of observed crime density over each neighborhood (left) and corresponding log  of estimated crime density  (right).}
\end{figure}

\begin{figure}[htbp]%[ht]
\begin{subfigure}{.29\linewidth}
\includegraphics[scale=.63]{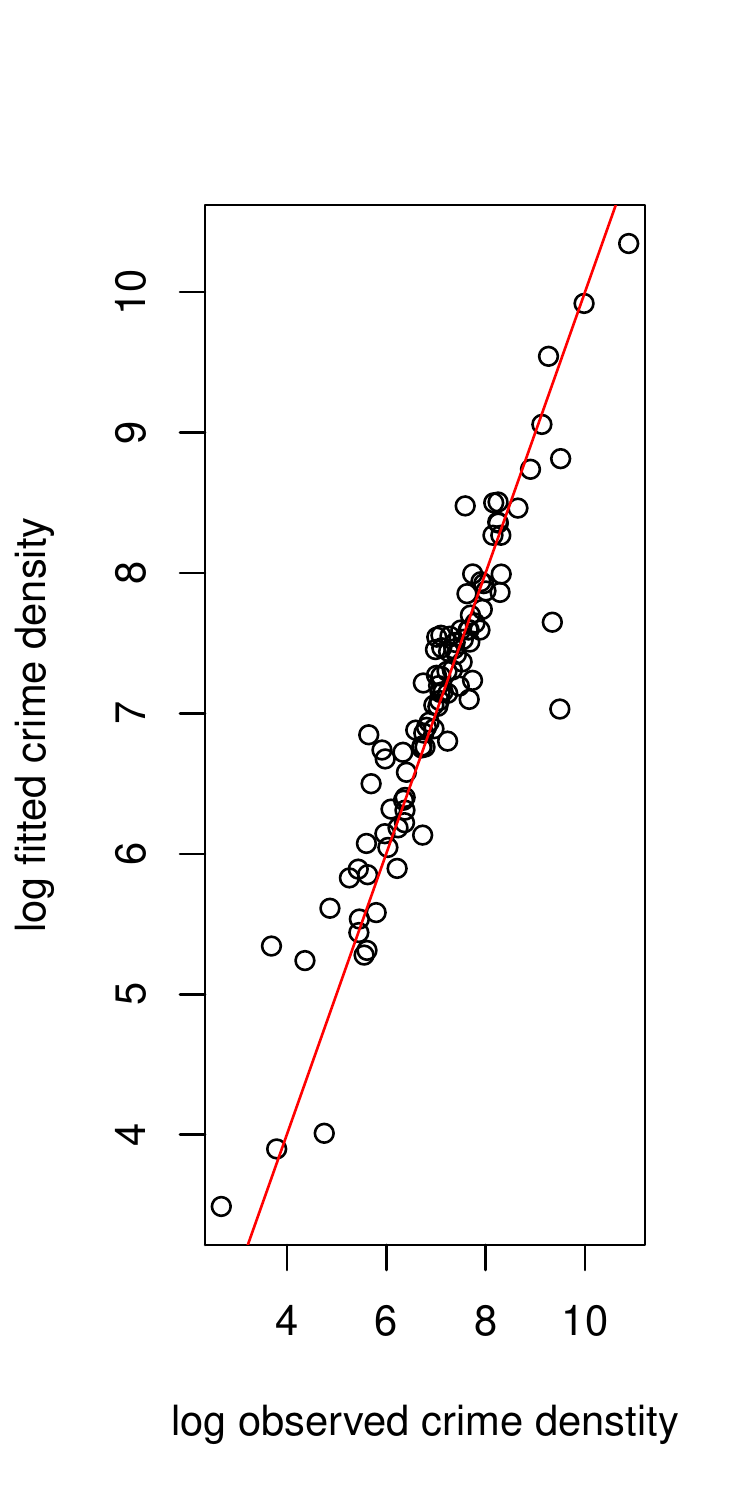}
\end{subfigure}
\begin{subfigure}{.68\linewidth}
\includegraphics[scale=.22]{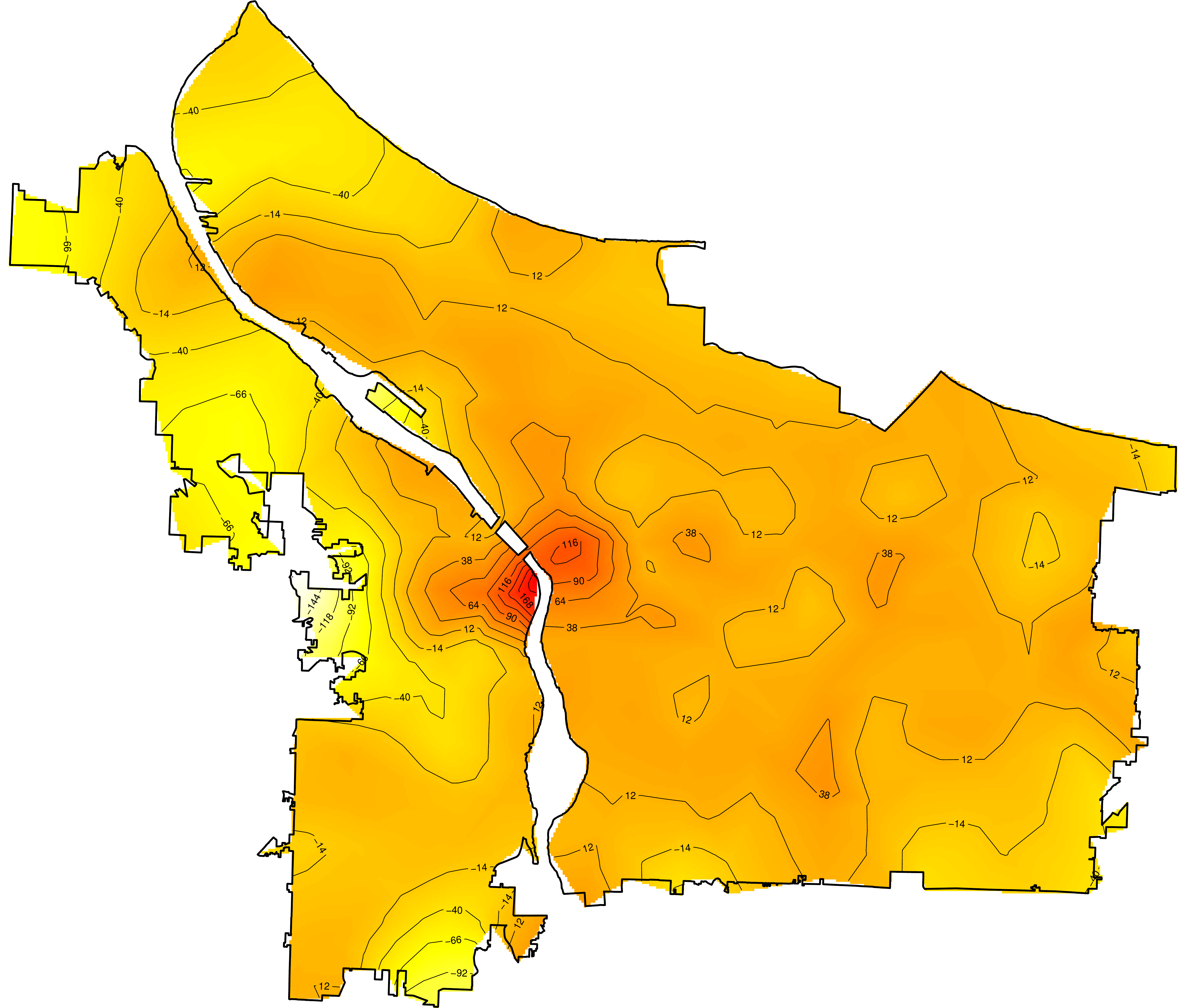}
\end{subfigure}
\caption{\label{gr::crime_dist_Port}Left: fitted crime density per neighborhoods vs observed one, in a logarithmic scale. Right: estimated spatial field $f$ over the city of Portland.}
\end{figure}

%%%%%%%%%%%%%%%%%%%%%%%%%%%%%%%%%%%%%%%%%%%%%%%%%%%%%%%%%%%%%%%%%%%%%%%%%%%%%%%%%%%%%%%%%%%%%%%%%%%%%%%%%%%%%%%%%%%%%%%%%%%%%%%%%%%%%%%%%%%%%%%%%
%%%%%%%%%%%%%%%%%%%%%%%%%%%%%%%%%%%%%%%%%%%%%%%%%%%%%%%%%%%%%%%%%%%%%%%%%%%%%%%%%%%%%%%%%%%%%%%%%%%%%%%%%%%%%%%%%%%%%%%%%%%%%%%%%%%%%%%%%%%%%%%%%
%%%%%%%%%%%%%%%%%%%%%%%%%%%%%%%%%%%%%%%%%%%%%%%%%%%%%%%%%%%%%%%%%%%%%%%%%%%%%%%%%%%%%%%%%%%%%%%%%%%%%%%%%%%%%%%%%%%%%%%%%%%%%%%%%%%%%%%%%%%%%%%%%

\section{Discussions}
\label{sec:discussions}

The method proposed can be extended in various directions.  First of all, owing  to the functional version of the PIRLS algorithm, our methodology can be extended to more complex  roughness penalties. This is particularly interesting when a priori knowledge is available on the problem under study, that can be formalized in terms of a partial differential equation modeling the phenomenon behavior.  \cite{Azzimonti2} shows for instance  that in some applications using a penalty based on a priori knowledge about the problem can dramatically improve the accuracy of the estimation. By using more complex roughness penalties we can also  account for spatial anisotropy and non-stationarity. See also Appendix \ref{app::spatial-variation}. Moreover, as mentioned in Section \ref{sec::fem}, it is possible to use different kinds of boundary conditions, allowing for a very flexible modelling of the behavior of the spatial field at the boundary of the domain of interest.

Furthermore, following the approach developed in \cite{Ettinger12}, the proposed method could be extended  to deal with data distributed over curved domains, specifically over surface domains. This would permit to tackle important applications in the geosciences, dealing for instance with Poisson counts and other type of variables of interest observed over the globe or over regions with complex orography. Other fascinating fields of applications of this modeling extension would be in the neurosciences and other life sciences, studying for instance signals associated to neuronal activity over the cortical surface, the  highly convoluted thin sheet
of neural tissue that constitutes the outermost part of the brain.

Other numerical techniques and associated basis could also be used to solve the estimation problem, instead of the finite element method here considered. For instance, B-Splines and NURBS \citep{Piegl97} are extensively used in computer-aided design (CAD),
manufacturing, and engineering, to represent the 3D surface of the designed item.  \cite{Wilhelm15} offers a first example of a spatial data analysis model exploiting these basis, thus avoiding the domain approximation implied by finite elements. Extending  this approach to the generalized linear setting here considered would further broaden the applicability of the proposed methodology to many engineering fields, including the automotive, the aircraft and the space sectors.

%%%%%%%%%%%%%%%%%%%%%%%%%%%%%%%%%%%%%%%%%%%%%%%%%%%%%%%%%%%%%%%%%%%%%%%%%%%%%%%%%%%%%%%%%%%%%%%%%%%%%%%%%%%%%%%%%%%%%%%%%%%%%%%%%%%%%%%%%%%%%%%%%
%%%%%%%%%%%%%%%%%%%%%%%%%%%%%%%%%%%%%%%%%%%%%%%%%%%%%%%%%%%%%%%%%%%%%%%%%%%%%%%%%%%%%%%%%%%%%%%%%%%%%%%%%%%%%%%%%%%%%%%%%%%%%%%%%%%%%%%%%%%%%%%%%
%%%%%%%%%%%%%%%%%%%%%%%%%%%%%%%%%%%%%%%%%%%%%%%%%%%%%%%%%%%%%%%%%%%%%%%%%%%%%%%%%%%%%%%%%%%%%%%%%%%%%%%%%%%%%%%%%%%%%%%%%%%%%%%%%%%%%%%%%%%%%%%%%

\section*{Aknowledgements}
This work has been developed while M.\ Wilhelm was visiting the Department of Mathematics of Politecnico di Milano, funded by the European  Erasmus program. M. Wilhelm would like to thank Yves Till\'e for his constant support and encouragement, Bree Ettinger and Laura Azzimonti for comments and helps. The authors are grateful to Victor Panaretos for insightful  comments, to  Timoth\'ee Produit for his help using QGIS and to Lionel Wilhelm for his support to construct  the mesh over the city of Portland. L.M.\ Sangalli acknowledges funding by MIUR Ministero dell'Istruzione dell'Università e della Ri\-cer\-ca, FIRB Futuro in Ricerca Starting Grant project \lq\lq Advanced statistical and numerical methods for the analysis of high dimensional functional data in life sciences and engineering\footnote{\url{http://mox.polimi.it/users/sangalli/firbSNAPLE.html}}.

%%%%%%%%%%%%%%%%%%%%%%%%%%%%%%%%%%%%%%%%%%%%%%%%%%%%%%%%%%%%%%%%%%%%%%%%%%%%%%%%%%%%%%%%%%%%%%%%%%%%%%%%%%%%%%%%%%%%%%%%%%%%%%%%%%%%%%%%%%%%%%%%%
%%%%%%%%%%%%%%%%%%%%%%%%%%%%%%%%%%%%%%%%%%%%%%%%%%%%%%%%%%%%%%%%%%%%%%%%%%%%%%%%%%%%%%%%%%%%%%%%%%%%%%%%%%%%%%%%%%%%%%%%%%%%%%%%%%%%%%%%%%%%%%%%%
%%%%%%%%%%%%%%%%%%%%%%%%%%%%%%%%%%%%%%%%%%%%%%%%%%%%%%%%%%%%%%%%%%%%%%%%%%%%%%%%%%%%%%%%%%%%%%%%%%%%%%%%%%%%%%%%%%%%%%%%%%%%%%%%%%%%%%%%%%%%%%%%%

\bibliography{wil_sang}
 \bibliographystyle{plainnat} % togliere plainnat per compilare con GSC2 style

\addcontentsline{toc}{chapter}{Bibliography}

%%%%%%%%%%%%%%%%%%%%%%%%%%%%%%%%%%%%%%%%%%%%%%%%%%%%%%%%%%%%%%%%%%%%%%%%%%%%%%%%%%%%%%%%%%%%%%%%%%%%%%%%%%%%%%%%%%%%%%%%%%%%%%%%%%%%%%%%%%%%%%%%%
%%%%%%%%%%%%%%%%%%%%%%%%%%%%%%%%%%%%%%%%%%%%%%%%%%%%%%%%%%%%%%%%%%%%%%%%%%%%%%%%%%%%%%%%%%%%%%%%%%%%%%%%%%%%%%%%%%%%%%%%%%%%%%%%%%%%%%%%%%%%%%%%%
%%%%%%%%%%%%%%%%%%%%%%%%%%%%%%%%%%%%%%%%%%%%%%%%%%%%%%%%%%%%%%%%%%%%%%%%%%%%%%%%%%%%%%%%%%%%%%%%%%%%%%%%%%%%%%%%%%%%%%%%%%%%%%%%%%%%%%%%%%%%%%%%%

\appendix

\section{On the spatial variation structure of the estimator}
\label{app::spatial-variation}

Let us focus on the special case of the proposed model, where the outcomes $y_i$ are normally distributed with mean $\mu_i=\theta_i=x_i^t \boldsymbol{\beta} +{f}(\mathbf{p}_i)$ and constant variance $\sigma^2$. In this case, the maximization of the penalized log-likelihood function is
  equivalent to the minimization of the penalized least-square functional considered in \cite{Sangalli13} and the quadratic form of the functional  allows to characterize analytically the solution to the estimation problem, so that there is no need to resort to the functional version of the PIRLS algorithm. Specifically, the estimator $\hat{f}$ of the spatial field is identified by the vector
\begin{equation}\label{eq::hat_f_normal}
  \hat{\mathbf{f}}= \left(\boldsymbol{\Psi}^t \mathbf{Q} \boldsymbol{\Psi} +\lambda \boldsymbol{P}\right)^{-1} \boldsymbol{\Psi}^t \mathbf{y},
\end{equation}
where the vector of observed data values $\mathbf{y}$ has replaced the vector of speudo-data $\mathbf{z}$ that is found in equation (\ref{eq:hat_f});  the matrix $\mathbf{Q} = \mathbf{I} - \mathbf{H}$ represents the contribution of the parametric part of the model, where now $\mathbf{H}= \mathbf{X}(\mathbf{X}^t\mathbf{X})^{-1}\mathbf{X}^t$. In this case, the estimator of the spatial field is linear in the observed data values, and  has a typical penalized regression form,  $\boldsymbol{P}$ being the discretization of the regularizing term.
From (\ref{eq::hat_f_normal}), it follows that  the estimate of the field $f$ at any generic location $\mathbf{p}\in\Omega$ is  given by
\begin{equation} \nonumber \hat f(\mathbf{p})=\boldsymbol{\psi}(\mathbf{p})^t (\boldsymbol{\Psi}^t \boldsymbol{\Psi} + \boldsymbol{P})^{-1}\boldsymbol{\Psi}^t \mathbf{y}. \end{equation}
Its mean and variance are given by
\begin{eqnarray}
\E[\hat f(\mathbf{p})]&=&  \boldsymbol{\psi}(\mathbf{p})^t (\boldsymbol{\Psi}^t \boldsymbol{\Psi} +\boldsymbol{P})^{-1}\boldsymbol{\Psi}^t \hat{\mathbf{f}}_n\nonumber \\
\nonumber \var[\hat f(\mathbf{p})]&=&  \sigma^2 \boldsymbol{\psi}(\mathbf{p})^t (\boldsymbol{\Psi}^t \boldsymbol{\Psi} +\boldsymbol{P})^{-1}\boldsymbol{\Psi}^t \boldsymbol{\Psi}(\boldsymbol{\Psi}^t \boldsymbol{\Psi} + \boldsymbol{P})^{-1}\boldsymbol{\psi}(\mathbf{p})
\nonumber
\end{eqnarray}
and the covariance  at any two locations $\mathbf{p}_1, \mathbf{p}_2\in \Omega$  is given by
\begin{eqnarray}
\nonumber
\cov[\hat f(\mathbf{p}_1,),\hat f(\mathbf{p}_2)] &=& \sigma^2 \boldsymbol{\psi}(\mathbf{p}_1)^t (\boldsymbol{\Psi}^t \boldsymbol{\Psi} +\boldsymbol{P})^{-1}\boldsymbol{\Psi}^t \boldsymbol{\Psi}(\boldsymbol{\Psi}^t \boldsymbol{\Psi} + \boldsymbol{P})^{-1}\boldsymbol{\psi}(\mathbf{p}_2).
\end{eqnarray}
The above expressions highlight that both the first order structure and the second order structure of the field estimator are determined by the penalization term. The regularizing term considered in this work induces an isotropic and stationary space variation. Different  regularizations would imply different mean and covariance structures for the field estimator.
 For instance, \cite{Azzimonti2} consider a regularized spatial regression model and show that by changing the regularizing terms and considering more complex differential operators it is possible to include in the model a priori information about the spatial variation of the phenomenon, and model also space anisotropies and non-stationarities. The estimator $\hat{\boldsymbol{\beta}}$ is also linear in the observed data values and it is straightforward to compute its distributional properties.

Outside of the Gaussian case, it is not possible to characterize analytically the solution of the estimation problem. Nevertheless, at each iteration of the functional PIRLS algorithm, the field estimator has the same form and  properties reported above, with respect to the pseudo-data $\mathbf{z}.$ Hence, also in this more general case, the regularizing term implies the first and second order structure of the estimator. Quantification of uncertainty is in this case possible using the techniques developed for
generalized additive models; see, e.g., \cite{Hastie90} and \cite{Wood06}. Bayesian approaches are also available in this context; see, e.g., \cite{Marra12}.

\section{Proof of the functional justification of PIRLS algorithm}
\label{app::PIRLS}
Using the notation given in Section \ref{sec::just_PIRLS_variational}, we want to maximize the penalized log-likelihood function $\mathcal{L}_p(\boldsymbol{\beta}, f)$ of any exponential family distribution, which is given by:
$$
\mathcal{L}_p(\boldsymbol{\beta}, f)= \mathcal{L}(\boldsymbol{\beta}, f)-\frac{\lambda}{2} m(f,f) =  \sum_{i=1}^n y_i \theta_i(\boldsymbol{\beta},f)- b(\theta_i(\boldsymbol{\beta}, f)) - \frac{\lambda}{2} m(f,f),$$
where $\mathcal{L}$ is the likelihood of an exponential family distribution, $b(\cdot)$ is a function depending on the distribution considered and $\theta_i (\boldsymbol{\beta}, f)= g(\mathbf{x}_i^t\boldsymbol{\beta} + f(\mathbf{p}_i))$ is the canonical parameter.
The maximizers ($\hat{\boldsymbol{\beta}}, \hat{f}$) of the functional (\ref{equ::pen_like}) must satisfy the following system of first order equations:
\begin{equation}
\label{equ::syst_variational}
\left\{
\renewcommand{\arraystretch}{2}
\begin{array}{rlll}
\displaystyle \frac{\partial \mathcal{L}(\hat{\boldsymbol{\beta}}, \hat{f})}{\partial \beta_k}&=&0, & \forall k=1,\dots,q, \\
\displaystyle \lim_{t\rightarrow 0}\frac{1}{t}\left[ \mathcal{L}(\hat{\boldsymbol{\beta}}, \hat{f}+t u)-\mathcal{L}(\hat{\boldsymbol{\beta}}, \hat{f})\right] - \lambda\  m(u, \hat{f})&=&0 & \forall u \in \mathcal{F}. \\
\end{array}
\right.
\end{equation}
This system involves the derivatives with respect to both parameters $\boldsymbol{\beta}$ and $f$. The derivative with respect to $f$ is a G\^ateaux derivative in the direction of $u,$ where $u\in\mathcal{F}$. In particular, $ m(u, \hat{f})$ is the derivative  of the term $ m(\hat{f}, \hat{f})$. We then have to compute the terms involving $\mathcal{L}$ only.
We first compute:
$$
\begin{array}{ll}
\mathcal{L}(\boldsymbol{\beta}, f+tu)-\mathcal{L}(\boldsymbol{\beta}, f)=&\\
=\displaystyle \sum_{i=1}^n \frac{1}{\phi} \left[(y_i \theta_i(\boldsymbol{\beta}, f+tu) - b(\theta_i(\boldsymbol{\beta}, f+tu))-y_i \theta_i(\boldsymbol{\beta},f) + b(\theta_i(\boldsymbol{\beta}, f))) \right]\\
= \displaystyle \sum_{i=1}^n \frac{1}{\phi} (y_i \theta_i(\boldsymbol{\beta}, f+tu)-y_i \theta_i(\boldsymbol{\beta},f)- \left( b(\theta_i(\boldsymbol{\beta}, f+tu))-b(\theta_i(\boldsymbol{\beta}, f)) \right).
\end{array}
$$
Dividing this expression by $t$ and taking the limit as $t$ tends to $0$ gives the G\^ateaux derivative of $\mathcal{L}(\boldsymbol{\beta},f)$ in the direction of $u$, denoted by $\frac{\partial \mathcal{L}(\boldsymbol{\beta},u)}{\partial f}$. We have then:
$$
\renewcommand{\arraystretch}{2}
\begin{array}{ll}
 \displaystyle \dfrac{\partial \mathcal{L}(\boldsymbol{\beta},u)}{\partial f}=\lim_{t\rightarrow 0} \frac{1}{t} \left[\mathcal{L}(\boldsymbol{\beta}, f+tu)-\mathcal{L}(\boldsymbol{\beta}, f)\right]\\
 =\displaystyle \lim_{t\rightarrow 0} \frac{1}{t} \left(\sum_{i=1}^n \frac{1}{\phi} \left[y_i \theta_i(\boldsymbol{\beta}, f+tu)-y_i \theta_i(\boldsymbol{\beta},f)- ( b(\theta_i(\boldsymbol{\beta}, f+tu))-b(\theta_i(\boldsymbol{\beta}, f))\right]\right)\\
=\displaystyle \sum_{i=1}^n \frac{1}{\phi} \left[y_i \frac{\partial\theta_i(\boldsymbol{\beta}, u)}{\partial f} - \frac{\partial b}{\partial \theta}(\theta_i(\boldsymbol{\beta}, f)) \frac{\partial \theta_i(\boldsymbol{\beta}, u)}{\partial f}\right].
\end{array}
$$
We then need to compute $\frac{\partial\theta_i(\boldsymbol{\beta}, u)}{\partial f}$.
We recall that, for a  distribution within the exponential family,  $\mathbb{E}\left[Y_i\right]=\mu_i=\frac{\partial b(\theta_i)}{\partial \theta}$ and $\text{var}(Y_i)=\frac{\partial^2 b(\theta)}{\partial\theta^2} \phi$. We thus have:
$$\frac{\partial \mu}{\partial \theta}= \frac{\partial^2 b}{\partial \theta^2} \Rightarrow \frac{\partial \theta}{\partial \mu}= \frac{1}{\frac{\partial^2 b}{\partial \theta^2}}$$
and hence:
$$\displaystyle \frac{\partial\theta_i(\boldsymbol{\beta}, u)}{\partial f}= \dfrac{1}{\frac{\partial^2 b(\theta_i(\boldsymbol{\beta}, u))}{\partial \theta^2}}\frac{\partial \mu_i(\boldsymbol{\beta}, u)}{\partial f }.$$
%and finally
%$$
%\begin{array}{ll}
%\displaystyle \frac{\partial\mu_i(\boldsymbol{\beta}, u)}{\partial f}= \lim_{t\rightarrow 0} \frac{1}{t} \left[ \mu_i(\boldsymbol{\beta}, f+tu)-\mu_i(\boldsymbol{\beta}, f)\right]\\
%=\displaystyle \lim_{t\rightarrow 0} \frac{1}{t} \left[\mathbf{x}_i^t \boldsymbol{\beta}+f(\mathbf{p}_i)+ t\ u(\mathbf{p}_i)-\mathbf{x}_i^t \boldsymbol{\beta}-f(\mathbf{p}_i)   \right] = u(\mathbf{p}_i).
%\end{array}
%$$
We can therefore conclude that:
$$\dfrac{\partial \mathcal{L}(\boldsymbol{\beta}, u)}{\partial f}=0 \Leftrightarrow \sum_{i=1}^n \frac{1}{\phi}\frac{(y_i-\frac{\partial b(\theta_i)}{\partial \theta})}{\frac{\partial^2 b(\theta_i)}{\partial \theta^2}}\frac{\partial \mu_i(\boldsymbol{\beta}, u)}{\partial f }= \sum_{i=1}^n\frac{(y_i-\frac{\partial b(\theta_i)}{\partial \theta})}{\text{var}(Y_i)}\frac{\partial \mu_i(\boldsymbol{\beta}, u)}{\partial f }=0.$$
Since we have $\var{(Y_i)}=V(\mu_i)\phi= \frac{\partial^2 b(\theta)}{\partial \theta^2} \phi$ and $\frac{\partial b(\theta_i)}{\partial \theta}=\mu_i$, we finally obtain the following expression for the derivative of the likelihood with respect to the functional parameter:
\begin{equation}\label{eq::der_f}
\dfrac{\partial \mathcal{L}(\boldsymbol{\beta}, u)}{\partial f}=0 \Leftrightarrow \sum_{i=1}^n\frac{(y_i-\mu_i)}{V(\mu_i)}\frac{\partial \mu_i(\boldsymbol{\beta}, u)}{\partial f }=0.
\end{equation}
We now need to compute the derivative of the likelihood with respect to $\beta_{j}$:
$$
\frac{\partial \mathcal{L}(\boldsymbol{\beta})}{\partial \beta_{j}}= \sum_{i=1}^n \frac{1}{\phi} \left(y_i\frac{\partial \theta_i}{\partial \beta_{j}}-\frac{\partial b(\theta_i)}{\partial \theta}\frac{\partial \theta_i}{\partial \beta_{j}}\right).$$
Since
$$\frac{\partial \theta_i}{\partial \beta_{j}}= \frac{\partial \theta_i}{\partial \mu_i}\frac{\partial \mu_i}{\partial \beta_{j}}= \frac{1}{\frac{\partial^2 b(\theta_i)}{\partial \theta^2}}\frac{\partial \mu_i}{\partial \beta_{j}},$$
and using similar computations, we finally get:
\begin{equation}\label{eq::der_beta}
  \frac{\partial \mathcal{L}(\boldsymbol{\beta})}{\partial \beta_{j}}=0 \Leftrightarrow \sum_{i=1}^n\frac{(y_i-\mu_i)}{V(\mu_i)}\frac{\partial \mu_i}{\partial \beta_{j} }=0.
\end{equation}
Putting (\ref{eq::der_f}) and (\ref{eq::der_beta}) together implies that the solution to (\ref{equ::syst_variational}) is equivalent to finding $\boldsymbol{\mu} = \boldsymbol{\mu}(\boldsymbol{\beta}, f)$ that satisfies
\begin{equation}\label{eq::system_der}
\left\{
\renewcommand{\arraystretch}{2}
\begin{array}{rlll}
\displaystyle \sum_{i=1}^n\frac{(y_i-\mu_i)}{V(\mu_i)}\frac{\partial \mu_i}{\partial \beta_{j} }&=&0, & \forall j=1,\dots,q, \\
\displaystyle \sum_{i=1}^n\frac{(y_i-\mu_i)}{V(\mu_i)}\frac{\partial \mu_i(\boldsymbol{\beta}, u)}{\partial f } + \lambda\  m(f,u) &=&0 & \forall u \in \mathcal{F}. \\
\end{array}
\right.
\end{equation}
If we now assumed that $V(\mu_i)$ is constant, solving (\ref{eq::system_der}) would be  equivalent to finding the  minimizers of the following functional
$$\mathcal{J}_\lambda({\boldsymbol{\beta}},{f})= \|\mathbf{V}^{-1/2}\left(\mathbf{y}-\boldsymbol{\mu}\right) \|^2 + \lambda \ m(f,f),$$
where $\mathbf{V}$ is the $n\times n$ diagonal matrix with entries $V(\mu_1),\ldots,V(\mu_n)$. Since in reality $\mathbf{V}$ depends on $\boldsymbol{\mu}$, this suggests an iterative computation scheme.

%So minimizing $\mathcal{J}_\lambda$ is equivalent to find $f\in \mathcal{F}$ such that:
Let $\boldsymbol{\mu}^{(k)}$ be an estimate of $\boldsymbol{\mu}(\boldsymbol{\beta}, {f})$ after $k$ iterations of such a scheme. At this point, we consider a first order approximation of $\boldsymbol{\mu}$ in the neighbourhood of the current value $\boldsymbol{\mu}^{(k)} = (\boldsymbol{\beta}^k, f^{(k)})$:
$$\boldsymbol{\mu}(\boldsymbol{\beta}, \mathbf{f})\approx\underbrace{g^{-1}(\mathbf{X}\boldsymbol{\beta}^{(k)}+\mathbf{f}_{n}^{(k)})}_{=\boldsymbol{\mu}^{(k)}} + \frac{\partial \boldsymbol{\mu}(\boldsymbol{\beta}, f)}{\partial \boldsymbol{\beta}} (\boldsymbol{\beta}-\boldsymbol{\beta}^{(k)})+  \frac{\partial \boldsymbol{\mu}(\boldsymbol{\beta}, f-f^{(k)})}{\partial f}.$$
We then have to compute the partial derivatives of $\boldsymbol{\mu}$ with respect to both parameters $\boldsymbol{\beta}$ and $f$. Let us start with the derivative with respect to $\boldsymbol{\beta}$. We have: $g(\mu_i^{(k)})=\mathbf{x}^t_i\boldsymbol{\beta}^{(k)}+f(\mathbf{p}_i)$. Taking the derivative with respect to $\beta_{j}$ on both sides, we get:$$g'(\mu_i^{(k)})\frac{\partial \mu_i^{(k)}}{\partial \beta_{j}}=\frac{\partial}{\partial \beta_{j}}g(\mu_i^{(k)})= \frac{\partial}{\partial \beta_{j}}\left( \mathbf{x}_i^t\boldsymbol{\beta}^{(k)} +f(\mathbf{p}_i)\right) = {x}_{ij},$$
where $ {x}_{ij}$ is the $j$th component of the vector $\mathbf{x}_i$, or equivalently  the $ij$th component of the design matrix $\mathbf{X}$.
Then:
$$\frac{\partial \mu_i^{(k)}}{\partial \beta_{j}} = \frac{{x}_{ij}}{g'(\mu_i^{(k)})}$$
that in matrix form is:
$$\frac{\partial \boldsymbol{\mu}(\boldsymbol{\beta}, f)}{\partial \boldsymbol{\beta}}= (\mathbf{G}^{(k)})^{-1}\mathbf{X},$$
where $\mathbf{G}^{(k)}$ is the $n \times n$ diagonal matrix with entries $g'({\mu}^{(k)}_1),\dots,g'({\mu}^{(k)}_n)$.

Let us now compute  the derivative $\boldsymbol{\mu}(\boldsymbol{\beta}, f)$, in the direction $f$. We first recall that $\boldsymbol{\mu}(\boldsymbol{\beta}, f)=g^{-1}(\boldsymbol{\theta}(\beta, f)$ and $\boldsymbol{\theta}=(\mathbf{X}\boldsymbol{\beta}+\mathbf{f}_n)$. For the $i$th component, we then obtain:
$$
\renewcommand{\arraystretch}{2}
\begin{array}{lll}
&\displaystyle \lim_{t \rightarrow 0} \frac{\mu_i(\boldsymbol{\beta}^{(k)}, f^{(k)}+t (f-f^{(k)}))-\mu_i(\boldsymbol{\beta}^{(k)}, f^{(k)})}{t}\\
&\qquad =\displaystyle \lim_{t \rightarrow 0} \frac{g^{-1}(\theta_i(\boldsymbol{\beta}^{(k)}, f^{(k)}+t (f-f^{(k)})))-g^{-1}(\theta_i(\boldsymbol{\beta}^{(k)}, f^{(k)}))}{\theta_i(\boldsymbol{\beta}^{(k)}, f^{(k)}+t (f-f^{(k)}))-\theta_i(\boldsymbol{\beta}^{(k)}, f^{(k)})}\\
&\qquad \displaystyle \cdot \frac{\theta_i(\boldsymbol{\beta}^{(k)}, f^{(k)}+t (f-f^{(k)}))-\theta_i(\boldsymbol{\beta}^{(k)}, f^{(k)})}{t}\\
&\qquad =\displaystyle (g^{-1})'(\theta_i(\boldsymbol{\beta}^{(k)}, f^{(k)}))\left(f(\mathbf{p}_i)-f^{(k)}(\mathbf{p}_i)\right) =   \frac{1}{g'(\mu_i^{(k)})}\left(f(\mathbf{p}_i)-f^{(k)}(\mathbf{p}_i)\right).
\end{array}
$$
Hence, we finally have the following first order approximation of $\mathcal{J}_\lambda\left(\boldsymbol{\beta}, f\right)$ in the neighbourhood of the current value $\boldsymbol{\mu}^{(k)} = (\boldsymbol{\beta}^{(k)}, f^{(k)})$:
\begin{eqnarray*}
  \tilde{\mathcal{J}}^{(k)}_\lambda\left(\boldsymbol{\beta}, f\right) & = & \|\mathbf{V}^{-1/2}\left[\mathbf{y} - \left(\boldsymbol{\mu}^{(k)} +(\mathbf{G}^{(k)})^{-1}\mathbf{X} (\boldsymbol{\beta} - \boldsymbol{\beta}^{(k)})  +  (\mathbf{G}^{(k)})^{-1} (\mathbf{f}_n - \mathbf{f}_n^{(k)}\right)\right]\|^2 + \lambda \ m(f,f)\\
& = & \|\mathbf{V}^{-1/2}(\mathbf{G}^{(k)})^{-1}\left(\mathbf{G}^{(k)}(\mathbf{y} - \boldsymbol{\mu}^{(k)}) + \mathbf{X} \boldsymbol{\beta}^{(k)} + \mathbf{f}_n^{(k)}  - \mathbf{X} \boldsymbol{\beta} - \mathbf{f}_n
\right) \|^2 + \lambda  \ m(f,f) \\
% \mathbf{G}^{(k)})^{-1}(\mathbf{G}^{(k)}(\mathbf{y}-\boldsymbol{\mu}^{(k)})-\boldsymbol{\theta}^{(k)} - \mathbf{f}_n) \|^2+ \frac{\lambda}{2} \ m(f,f),\\
% \|\mathbf{V}^{-1/2}(\mathbf{G}^{(k)})^{-1}(\mathbf{G}^{(k)}(\mathbf{y}-\boldsymbol{\mu}^{(k)})-\boldsymbol{\theta}^{(k)} - \mathbf{f}_n) \|^2+ \frac{\lambda}{2} \ m(f,f),
\end{eqnarray*}
Setting $\mathbf{z}^{(k)}=\mathbf{G}^{(k)}(\mathbf{y}-\boldsymbol{\mu}^{(k)})+ \mathbf{X}\boldsymbol{\beta}^{(k)}+ \mathbf{f}_n^{(k)}$, and denoting by $\mathbf{W}^{(k)}$ the $n\times n$ diagonal matrix with ${i}$th entry $V(\mu_i^{(k)})^{-1}g'(\mu_i^{(k)})^{-2}$, we can rewrite
$$\tilde{\mathcal{J}}^{(k)}_\lambda\left(\boldsymbol{\beta}, f\right)= \|(\mathbf{W}^{(k)})^{1/2}(\mathbf{z}^{(k)}-\mathbf{X}\boldsymbol{\beta}-\mathbf{f}_n)\|^2+ \lambda  \ m(f,f).$$
Since $\mathbf{W}^{(k)}$ is positive definite, $\tilde{\mathcal{J}}_\lambda$ is  a quadratic form whose minimum exists and is unique.

%%%%%%%%%%%%%%%%%%%%%%%%%%%%%%%%%%%%%%%%%%%%%%%%%%%%%%%%%%%%%%%%%%%%%%%%%%%%%%%%%%%%%%%%%%%%%%%%%%%%%%%%%%%%%%%%%%%%%%%%%%%%%%%%%%%%%%%%%%%%%%%%%
%%%%%%%%%%%%%%%%%%%%%%%%%%%%%%%%%%%%%%%%%%%%%%%%%%%%%%%%%%%%%%%%%%%%%%%%%%%%%%%%%%%%%%%%%%%%%%%%%%%%%%%%%%%%%%%%%%%%%%%%%%%%%%%%%%%%%%%%%%%%%%%%%
%%%%%%%%%%%%%%%%%%%%%%%%%%%%%%%%%%%%%%%%%%%%%%%%%%%%%%%%%%%%%%%%%%%%%%%%%%%%%%%%%%%%%%%%%%%%%%%%%%%%%%%%%%%%%%%%%%%%%%%%%%%%%%%%%%%%%%%%%%%%%%%%%

\section{Proof of Proposition \ref{prop::weight_min_prob}}
\label{app::proof_1}
Before giving the proof of Proposition \ref{prop::weight_min_prob}, we recall the Lax-Milgram theorem \cite[see, e.g.,][]{Quarteroni14}:
\begin{theorem}[Lax-Milgram]
Let $\mathcal{F}$ be a Hilbert space, $G(\cdot,\cdot):\mathcal{F}\times\mathcal{F}\rightarrow\mathbb{R}$ a continuous and coercive bilinear form and $F:\mathcal{F}\rightarrow\mathbb{R}$ a linear and continuous functional. Then, there exists a unique solution of the following problem:$$ \text{find }u\in \mathcal{F} \text{ such that }\ G(u,v)=F(v),\ \forall v\in \mathcal{F}.$$
Moreover, if $G(\cdot,\cdot)$ is symmetric, then $u \in \mathcal{F}$ is the unique minimizer in $\mathcal{F}$ of the functional $B:\mathcal{F}\rightarrow \mathbb{R}$, defined as
\begin{equation}
\nonumber
B(u)=G(u,u)-2F(u).
\end{equation}
\end{theorem}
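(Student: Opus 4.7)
The plan is to reduce the variational problem $G(u,v)=F(v)$ for all $v$ to an operator equation $Au=z$ in $\mathcal{F}$, and then produce the solution via a contraction mapping argument; once existence and uniqueness are in hand, the minimization characterization in the symmetric case follows from a direct algebraic expansion.

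First I would invoke the Riesz representation theorem twice. Since $F$ is linear and continuous on the Hilbert space $\mathcal{F}$, there is a unique $z \in \mathcal{F}$ with $F(v)=\langle z,v\rangle$ for all $v$. Next, for each fixed $u\in \mathcal{F}$ the map $v \mapsto G(u,v)$ is linear and (by continuity of $G$) bounded, so Riesz yields a unique $Au \in \mathcal{F}$ with $G(u,v)=\langle Au,v\rangle$. Bilinearity of $G$ makes $A:\mathcal{F}\to\mathcal{F}$ linear, and the continuity bound $|G(u,v)|\leq M\|u\|\|v\|$ gives $\|Au\|\leq M\|u\|$. In this language the problem becomes: find $u\in \mathcal{F}$ with $Au=z$.

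To produce such a $u$, I would show that for $\rho>0$ small enough the map $T_\rho(w):=w-\rho(Aw-z)$ is a strict contraction on $\mathcal{F}$. Expanding $\|T_\rho(w_1)-T_\rho(w_2)\|^2$ in the inner product and using coercivity $\langle Aw,w\rangle=G(w,w)\geq \alpha\|w\|^2$ together with $\|Aw\|\leq M\|w\|$ gives
\begin{equation}
\nonumber
\|T_\rho(w_1)-T_\rho(w_2)\|^2 \leq \bigl(1-2\rho\alpha+\rho^2 M^2\bigr)\,\|w_1-w_2\|^2,
\end{equation}
which is a strict contraction for any $\rho\in(0,2\alpha/M^2)$. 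Banach's fixed-point theorem then produces a unique fixed point $u\in\mathcal{F}$ of $T_\rho$, and the fixed-point equation is exactly $Au=z$, i.e.\ $G(u,v)=F(v)$ for all $v$. Uniqueness of the solution to the original variational problem follows directly from coercivity: if $u_1,u_2$ were two solutions, then $G(u_1-u_2,v)=0$ for every $v$, so in particular for $v=u_1-u_2$, forcing $\alpha\|u_1-u_2\|^2\leq 0$ and hence $u_1=u_2$. The main subtlety here is the choice of $\rho$: coercivity alone does not close the estimate, and one must combine it with the continuity bound to make the quadratic $1-2\rho\alpha+\rho^2 M^2$ drop below $1$.

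For the second part, assume $G$ is symmetric and let $u$ be the unique solution found above. For any $w\in\mathcal{F}$, I would compute
\begin{equation}
\nonumber
B(u+w)-B(u) = G(u,w)+G(w,u)+G(w,w)-2F(w) = 2\bigl(G(u,w)-F(w)\bigr)+G(w,w) = G(w,w),
\end{equation}
where symmetry is used to collapse $G(u,w)+G(w,u)$ into $2G(u,w)$, and the variational identity cancels the remaining term. Coercivity then gives $B(u+w)-B(u)\geq \alpha\|w\|^2\geq 0$, with equality only for $w=0$, so $u$ is the unique minimizer of $B$ on $\mathcal{F}$. Conversely, any minimizer $u^\star$ of $B$ must satisfy the first-order condition $G(u^\star,v)=F(v)$ obtained from $\tfrac{d}{dt}B(u^\star+tv)\big|_{t=0}=0$, so by uniqueness of the variational solution $u^\star=u$.
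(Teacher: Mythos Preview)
Your proof is correct and follows one of the standard routes to Lax--Milgram: Riesz representation to recast the problem as $Au=z$, then a Banach contraction argument on $T_\rho(w)=w-\rho(Aw-z)$, with the expansion in the symmetric case handled cleanly.

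However, you should be aware that the paper does not actually prove this theorem at all. It merely \emph{recalls} Lax--Milgram as a classical tool, citing \cite{Quarteroni14}, and then applies it in the proof of Proposition~\ref{prop::weight_min_prob}. So there is no ``paper's own proof'' to compare against: your contribution goes beyond what the paper provides, supplying a full self-contained argument where the authors simply invoke the literature. If your goal was to reproduce the paper's treatment, a one-line citation would have sufficed; if your goal was to make the appendix self-contained, your argument does the job and is the textbook approach.
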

We also need the following result.
\begin{lemma}
\label{lem::coerc}
The bilinear and symmetric form $G:H^2_{\mathbf{n}_0}(\Omega)\times H^2_{\mathbf{n}_0}(\Omega)$ defined as $$G(v,u)=\mathbf{u}_n^t \mathbf{Q} \ \mathbf{v}_n + \lambda\int_{\Omega} (\Delta u)(\Delta v),$$
is continuous and coercive.
\end{lemma}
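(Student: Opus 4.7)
My plan is to prove continuity and coercivity of $G$ on $H^2_{\mathbf{n}_0}(\Omega)$ separately; bilinearity and symmetry are immediate from the definition (interpreting $\mathbf{Q}$ in its symmetric form inherited from the $\mathbf{W}$-weighted inner product inherent to the PIRLS derivation), so only the two estimates require genuine work.

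For continuity, I would bound each summand of $|G(u,v)|$ by a multiple of $\|u\|_{H^2}\|v\|_{H^2}$. The Laplacian term is immediate via Cauchy--Schwarz: $|\int_\Omega(\Delta u)(\Delta v)|\leq\|\Delta u\|_{L^2}\|\Delta v\|_{L^2}\leq 2\,\|u\|_{H^2}\|v\|_{H^2}$. The pointwise-evaluation term is handled via the Sobolev embedding $H^2(\Omega)\hookrightarrow C^0(\overline{\Omega})$, which is valid in $\mathbb{R}^2$ and is already invoked earlier in the paper to give meaning to the values $u(\mathbf{p}_i)$; it yields $|u(\mathbf{p}_i)|\leq C_S\|u\|_{H^2}$, whence $\|\mathbf{u}_n\|_2\leq\sqrt{n}\,C_S\|u\|_{H^2}$ and $|\mathbf{u}_n^t\mathbf{Q}\mathbf{v}_n|\leq n\,C_S^2\,\|\mathbf{Q}\|_2\,\|u\|_{H^2}\|v\|_{H^2}$.

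For coercivity, the key idea is that $\int_\Omega(\Delta u)^2$ controls the full $H^2$ norm up to the kernel of the Neumann Laplacian (the constants), while the pointwise-evaluation term $\mathbf{u}_n^t\mathbf{Q}\mathbf{u}_n$ picks up this remaining constant mode. Concretely, I decompose $u=\bar u+(u-\bar u)$ with $\bar u=|\Omega|^{-1}\int_\Omega u$. Standard elliptic regularity for the Neumann Laplacian on a $C^2$ domain gives $\|u\|_{H^2}\leq C_R(\|\Delta u\|_{L^2}+\|u\|_{L^2})$; Green's identity together with the homogeneous Neumann BC yields $\|\nabla u\|_{L^2}^2=-\int_\Omega(u-\bar u)\Delta u$, and the Poincar\'e inequality for mean-zero functions then gives $\|\nabla u\|_{L^2}+\|u-\bar u\|_{L^2}\leq C_P\|\Delta u\|_{L^2}$. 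Combining these bounds one obtains $\|u-\bar u\|_{H^2}^2\leq C\int_\Omega(\Delta u)^2$.

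It remains to bound $|\bar u|^2$. I would invoke the identifiability assumption that $\mathbf{1}\notin\mathrm{Col}(\mathbf{X})$ (implicit in the problem setup, since otherwise $\boldsymbol{\beta}$ and $f$ cannot be separately identified), which makes $\mathbf{1}^t\mathbf{Q}\mathbf{1}>0$. Writing $\mathbf{u}_n=(u-\bar u)_n+\bar u\,\mathbf{1}$, expanding $\mathbf{u}_n^t\mathbf{Q}\mathbf{u}_n$ and applying Young's inequality to the cross terms gives $\mathbf{u}_n^t\mathbf{Q}\mathbf{u}_n\geq c_1|\bar u|^2-c_2\|(u-\bar u)_n\|_2^2$, and the latter is in turn bounded by $n\,C_S^2\|u-\bar u\|_{H^2}^2\leq C''\int_\Omega(\Delta u)^2$. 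Adding the two pieces then yields $G(u,u)\geq\alpha(\|u-\bar u\|_{H^2}^2+|\bar u|^2)\geq\alpha'\|u\|_{H^2}^2$. The main obstacle is precisely this final absorption step: since $\mathbf{Q}$ is not positive semidefinite in the standard Euclidean geometry (only in the $\mathbf{W}$-weighted one), the indefinite cross terms have to be controlled carefully so they do not overwhelm the positive contributions, and this is where the interplay between the $\mathbf{W}$-weighted structure of $\mathbf{Q}$ and the standard Sobolev norm on $\mathcal{F}$ has to be unwound.
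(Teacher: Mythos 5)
Your proof is correct in outline, but it takes a genuinely different---and in fact more careful---route than the paper's. The continuity half is essentially identical (Cauchy--Schwarz on the Laplacian term; Sobolev embedding $H^2(\Omega)\hookrightarrow C^0(\overline{\Omega})$ plus finite-dimensional norm equivalence on the evaluation term), so the real divergence is in coercivity. The paper simply discards $\mathbf{u}_n^t\mathbf{Q}\mathbf{u}_n\geq 0$ and invokes the claim that the semi-norm $|u|_{H^2}=\|\Delta u\|_{L^2}$ is equivalent to the full norm $\|u\|_{H^2}$ on $H^2_{\mathbf{n}_0}(\Omega)$, obtaining $G(u,u)\geq \lambda C_0^2\|u\|^2_{H^2}$ in one line; note that this claimed equivalence fails for nonzero constants, which lie in $H^2_{\mathbf{n}_0}$ and are annihilated by $\Delta$, so the paper's shortcut is silent exactly on the constant mode. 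Your decomposition $u=\bar u+(u-\bar u)$, with elliptic regularity, Green's identity and Poincar\'e giving $\|u-\bar u\|^2_{H^2}\leq C\int_\Omega(\Delta u)^2$, and the data-fidelity term picking up $|\bar u|^2$, addresses precisely that mode; the final absorption does close, since the Young parameter can be tuned so that the negative multiple of $\|(u-\bar u)_n\|_2^2$ is dominated by $\lambda\int_\Omega(\Delta u)^2$ (or one argues by the two-case split on whether $|\bar u|$ dominates $\|u-\bar u\|_{H^2}$). What your route buys is a proof that is honest about the kernel of the penalty; what it costs is the extra hypothesis $\mathbf{1}\notin\mathrm{Col}(\mathbf{X})$, equivalently $\mathbf{1}^t\mathbf{Q}\mathbf{1}>0$, which the paper never states---and which is genuinely needed, since with an intercept column $G$ vanishes on constants and cannot be coercive. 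Your reading of $\mathbf{Q}$ through the $\mathbf{W}$-weighted geometry is also the right one: $\mathbf{Q}$ itself is an oblique projector, and the symmetric positive semidefinite object underlying the quadratic form is $\mathbf{W}\mathbf{Q}$, an identification the paper makes tacitly when it asserts that $\mathbf{Q}$ is symmetric with nonnegative largest eigenvalue.
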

\begin{proof}
We recall the definition of the norm $\|\cdot\|_{H^2}$ and of the semi-norm $|\cdot|_{H^2}$:
$$\| u \|_{H^2} = \sum_{|\alpha| \leq 2} \|\partial^\alpha u\|_{L^2}, \quad |u|_{H^2} = \|\Delta u \|_{L^2}, \quad \forall u \in H^2.$$
First, note that the semi-norm $|\cdot|_{H^2}$ and the norm $\|\cdot\|_{H^2}$ are equivalent in $H^2_{\mathbf{n}_0}(\Omega)$ \cite{Quarteroni14}, i.e, there exists $C_0>0$ such that $ |u|_{H^2(\Omega)}\geq C_0 \|u\|_{H^2(\Omega)}, \ \forall u\in H^2_{\mathbf{n}_0}(\Omega)$. Then, we have:
$$
\renewcommand{\arraystretch}{2}
\begin{array}{lll}
\displaystyle G(u,u)&=&\underbrace{\mathbf{u}_n^t \mathbf{Q} \ \mathbf{u}_n}_{\geq 0} + \lambda\int_{\Omega} (\Delta u)^2\\
\displaystyle  &\geq& \lambda\int_{\Omega} (\Delta u)^2 = \lambda |u|_{H^2(\Omega)} \\
&\geq& \lambda C_0 \|u\|_{H^2(\Omega)};
\end{array}
$$
hence, $G(\cdot,\cdot)$ is coercive.

We then show the continuity of $G(\cdot,\cdot)$. Since $H^2(\Omega)\subset C^0(\Omega)$ and since the norms $\|\cdot\|_{\infty}$ and $\|\cdot\|_{2}$ are equivalent on $\mathbb{R}^n$, there exists a constant $C_{1}$ such that $ \|\mathbf{v}_n\|_{\infty} \leq C_{2} \| v\|_{H^2(\Omega)}, \ \forall v \in H^2(\Omega)$. Since $\mathbf{Q}$ is symmetric, its largest eigenvalue $\rho$ is non negative. Then we have:
$$
\renewcommand{\arraystretch}{2}
\begin{array}{lll}
\displaystyle G(u,v)&=&\mathbf{u}_n^t \mathbf{Q} \ \mathbf{v}_n + \lambda\int_{\Omega} (\Delta u)(\Delta v)\\
\displaystyle  &\leq& \rho \|\mathbf{u}_n\|_{\infty}\|\mathbf{v}_n\|_{\infty}+ \lambda |u|_{H^2(\Omega)}  |v|_{H^2(\Omega)} \\
%&\leq& \rho\ C_1^2 \ n^2 \ \|\mathbf{u}_n\|_{2}\|\mathbf{v}_n\|_{2}+\lambda\ C_0^2 \ \|u\|_{H^2(\Omega)} \|v\|_{H^2(\Omega)} \\
&\leq &  \rho\ C_1^2  \|u\|_{H^2(\Omega)} \|v\|_{H^2(\Omega)} + \lambda\ C_0^2 \ \|u\|_{H^2(\Omega)} \|v\|_{H^2(\Omega)} \\
&\leq& \max{\left\{  \rho\ C_1^2 ,\lambda\ C_0^2\right\}} \|u\|_{H^2(\Omega)} \|v\|_{H^2(\Omega)}.
\end{array}
$$
And so the bilinear form $G(\cdot,\cdot)$ is also continuous.
\end{proof}

We are now ready to give the proof of Proposition  \ref{prop::weight_min_prob}.
\begin{proof}
First of all, given  $f\in H^2_{\mathbf{n}_0}$, the unique minimizer of the functional $\tilde{\mathcal{J}}_{\lambda}(\boldsymbol{\beta}, f)$ is given by:
\begin{equation}
\label{equ::express_betahat}
\tilde{\boldsymbol{\beta}}(f)= (\mathbf{X}^t\mathbf{W}\mathbf{X})^{-1}\mathbf{X}^t \mathbf{W}(\mathbf{z}-\mathbf{f}_n).
\end{equation}
To show that, we take the derivative of $\tilde{\mathcal{J}}_{\lambda}(\boldsymbol{\beta}, f)$ with respect to $\boldsymbol{\beta}$:
$$\frac{\partial \tilde{\mathcal{J}}_{\lambda}(\boldsymbol{\beta}, f)}{\partial \boldsymbol{\beta}}= -2 \mathbf{X}^t\mathbf{W}(\mathbf{z}-\mathbf{f}_n)+ (\mathbf{X}^t\mathbf{W}\mathbf{X})\boldsymbol{\beta}.$$
Since $\mathbf{X}$ is  a full-rank matrix and $\mathbf{W}$ is invertible (the $ii$th entry of $\mathbf{W}$ is in fact strictly positive, since it is  different from zero and $\geq 0$ by construction), $\mathbf{X}^t\mathbf{W}\mathbf{X}$ is invertible.  Finally the necessary condition $\partial \tilde{\mathcal{J}}_{\lambda}(\tilde{\boldsymbol{\beta}}, f)/ \partial \boldsymbol{\beta}=0$ is satisfied if and only if $\tilde{\boldsymbol{\beta}}$ is given by (\ref{equ::express_betahat}). Since for fixed $f$, $\tilde{\mathcal{J}}_{\lambda}(\boldsymbol{\beta}, f)$ is clearly convex, $\tilde{\boldsymbol{\beta}}$ is a minimum.

Now, plugging $\tilde{\boldsymbol{\beta}}$ into the objective function, we obtain the following form of the functional:
\begin{equation}
\nonumber
\tilde{\mathcal{J}}_{\lambda}(f)= \mathbf{z}^t \mathbf{Q}\ \mathbf{z}-2 \mathbf{f}_n\mathbf{Q}\ \mathbf{z}+{\mathbf{f}_n}^t \mathbf{Q}\ \mathbf{f}_n+\lambda \int_{\Omega} (\Delta f)^2.
\end{equation}
Since we want to optimize this functional with respect to  $f$ only, the problem becomes finding $\tilde{f} \in H^2_{\mathbf{n}_0}$  that minimizes:
\begin{equation}
\label{equ::rewrite_functional}
\mathcal{J}^*_{\lambda}(f)= {\mathbf{f}_n}^t \mathbf{Q}\ \mathbf{f}_n+\lambda \int_{\Omega} (\Delta f)^2 - 2 \mathbf{f}_n\mathbf{Q}\ \mathbf{z}.
\end{equation}
We can then write $\mathcal{J}^*_{\lambda}(f)=G(f,f)- 2F(f)$, where
$$G(u,v)=\mathbf{u}_n^t \mathbf{Q} \ \mathbf{v}_n + \lambda\int_{\Omega} (\Delta u)(\Delta v)\qquad \text{and}\qquad F(v)=\mathbf{v}_n^t \mathbf{Q} \ \mathbf{z}.$$
Lemma \ref{lem::coerc} ensures that  $G(\cdot,\cdot)$ is a coercive and continuous bilinear form on $H^2_{\mathbf{n}_0}\times H^2_{\mathbf{n}_0}$; moreover, $F$ is trivially a linear and continuous functional on $H^2_{\mathbf{n}_0}$.
Applying  the Lax-Milgram lemma and thanks to the symmetry of $G(\cdot,\cdot)$, the minimizer of the functional (\ref{equ::rewrite_functional}) is the function  $\tilde{f} \in H^2_{\mathbf{n}_0}$ such that
$$
\mathbf{u}_n^t\mathbf{Q}\ \tilde{\mathbf{f}}_n + \lambda \int_{\Omega} (\Delta u) (\Delta \tilde{f}) = \mathbf{u}_n^t\mathbf{Q}\ \mathbf{z}, \qquad \forall u\in H^2_{\mathbf{n}_0}
$$
We conclude that $\tilde{f}$ exists and is unique and hence also $\tilde{\boldsymbol{\beta}}$ exists and is unique.\end{proof}

%%%%%%%%%%%%%%%%%%%%%%%%%%%%%%%%%%%%%%%%%%%%%%%%%%%%%%%%%%%%%%%%%%%%%%%%%%%%%%%%%%%%%%%%%%%%%%%%%%%%%%%%%%%%%%%%%%%%%%%%%%%%%%%%%%%%%%%%%%%%%%%%%
%%%%%%%%%%%%%%%%%%%%%%%%%%%%%%%%%%%%%%%%%%%%%%%%%%%%%%%%%%%%%%%%%%%%%%%%%%%%%%%%%%%%%%%%%%%%%%%%%%%%%%%%%%%%%%%%%%%%%%%%%%%%%%%%%%%%%%%%%%%%%%%%%
%%%%%%%%%%%%%%%%%%%%%%%%%%%%%%%%%%%%%%%%%%%%%%%%%%%%%%%%%%%%%%%%%%%%%%%%%%%%%%%%%%%%%%%%%%%%%%%%%%%%%%%%%%%%%%%%%%%%%%%%%%%%%%%%%%%%%%%%%%%%%%%%%

%%%%%%%%%%%%%%%%%%%%%%%%%%%%%%%%%%%%%%%%%%%%%%%%%%%%%%%%%%%%%%%%%%%%%%%%%%%%%%%%%%%%%%%%%%%%%%%%%%%%%%%%%%%%%%%%%%%%%%%%%%%%%%%%%%%%%%%%%%%%%%%%%
%%%%%%%%%%%%%%%%%%%%%%%%%%%%%%%%%%%%%%%%%%%%%%%%%%%%%%%%%%%%%%%%%%%%%%%%%%%%%%%%%%%%%%%%%%%%%%%%%%%%%%%%%%%%%%%%%%%%%%%%%%%%%%%%%%%%%%%%%%%%%%%%%
%%%%%%%%%%%%%%%%%%%%%%%%%%%%%%%%%%%%%%%%%%%%%%%%%%%%%%%%%%%%%%%%%%%%%%%%%%%%%%%%%%%%%%%%%%%%%%%%%%%%%%%%%%%%%%%%%%%%%%%%%%%%%%%%%%%%%%%%%%%%%%%%%

\section{A general formulation of the model}
\label{app::linear_operator}

We here present a unified formulation of the model, that comprehends as special cases the model versions for geostatistical and for areal data presented in the paper.
Let $L:\mathcal{F}\to \mathbb{R}^n$ be a linear and continuous operator.
Assume that $Y_1,\dots,Y_n$ are independent, with $Y_i$  having a distribution within the exponential family, with mean $\mu_i$ and common scale paramenter $\phi$. Let  the  mean vector $\boldsymbol{\mu}$ be defined by: \begin{equation}
\nonumber
g(\boldsymbol{\mu}) = \boldsymbol{\theta}=  \boldsymbol{X}\boldsymbol{\beta} +\mathbf{f}_n \qquad \textrm{ where }  \mathbf{f}_n =L(f).
\end{equation}
We estimate  $\boldsymbol{\beta} \in \mathbb{R}^q$   and the spatial field $f\in\mathcal{F}$ by maximizing the penalized log-likelihood functional in (\ref{equ::pen_like}). We moreover define  $\boldsymbol{\Psi}$ as the $n\times K $ matrix whose $j-$th column is given by $L(\psi_j)$,  where $\psi_j$ is the $j-$th finite element basis. If $L$  is the linear operator that evaluates a function in $\mathcal{F}$ at $n$ spatial locations $\mathbf{p}_1,\dots,\mathbf{p}_n \in \Omega,$ then $\mathbf{f}_n =(f(\mathbf{p}_1),\dots,f(\mathbf{p}_n))^t$  and $    \boldsymbol{\Psi}$ has the form in (\ref{eq::Psi}); we obtain in this case the model version for geostatistical data. If $L$ is the linear  operator that returns the integrals of a function in $\mathcal{F}$  over $n$ disjoints subregions $D_1,\dots,D_n$  of  the domain $\Omega,$ then $\mathbf{f}_n = ( \int_{D_1} f , \dots, \int_{D_n} f)^t$ and the $n\times K $ matrix $\boldsymbol{\Psi}$ has entry $(i,j)$ given by $\int_{D_i}\psi_j;$  we obtain in this case the model version for areal data.

Owing to the linearity of the operator, the results presented  in Sections \ref{sec::just_PIRLS_variational} and \ref{sec::fem}
 and  in Appendices \ref{app::spatial-variation}, \ref{app::PIRLS} and \ref{app::proof_1}  carry over to the more general formulation of the model here presented, replacing the  definition of $\mathbf{f}_n $
 and $\boldsymbol{\Psi}$ with the more general definitions based on the linear operator $L$ given above.

%%%%%%%%%%%%%%%%%%%%%%%%%%%%%%%%%%%%%%%%%%%%%%%%%%%%%%%%%%%%%%%%%%%%%%%%%%%%%%%%%%%%%%%%%%%%%%%%%%%%%%%%%%%%%%%%%%%%%%%%%%%%%%%%%%%%%%%%%%%%%%%%%
%%%%%%%%%%%%%%%%%%%%%%%%%%%%%%%%%%%%%%%%%%%%%%%%%%%%%%%%%%%%%%%%%%%%%%%%%%%%%%%%%%%%%%%%%%%%%%%%%%%%%%%%%%%%%%%%%%%%%%%%%%%%%%%%%%%%%%%%%%%%%%%%%
%%%%%%%%%%%%%%%%%%%%%%%%%%%%%%%%%%%%%%%%%%%%%%%%%%%%%%%%%%%%%%%%%%%%%%%%%%%%%%%%%%%%%%%%%%%%%%%%%%%%%%%%%%%%%%%%%%%%%%%%%%%%%%%%%%%%%%%%%%%%%%%%%

\section{Test fields on horseshoe domain }
\label{app::test-fields}

\begin{figure}[bthp]
\centering
\begin{tikzpicture}[scale = 2.5]
\fill[color=gray!20] (3,0.9)-- (3,0.1) -- (3,0.1) arc (-90:90:0.4) -- cycle;
\fill[color=gray!20] (3,-0.9)-- (3,-0.1) -- (3,-0.1) arc (90:-90:0.4) -- cycle;
\fill[color=gray!20] (0,0.9) arc (90:270:0.9) -- (0,-0.9) -- (0,-0.1) arc (270:90:0.1)  -- cycle;

\draw (3.2,0.5) node[above]{$A$} ;
\draw (-0.6,0) node[above]{$B$} ;
\draw (3.2,-0.5) node[above]{$C$} ;

\draw[thick =2] (3,0.9)-- (0,0.9);
\draw[thick =2] (0,0.9) arc (90:270:0.9) ;
\draw[thick =2] (3,-0.9)-- (0,-0.9);
\draw[thick =2] (3,-0.9) arc (-90:90:0.4) ;
\draw[thick =2] (3,-0.1)-- (0,-0.1);
\draw[thick =2] (0,-0.1) arc (270:90:0.1) ;
\draw[thick =2] (3,0.1)-- (0,0.1);
\draw[thick =2] (3,0.1) arc (-90:90:0.4) ;

%\draw[thick = 2] (0,0.9) arc (90:270);% -- (0,-0.9) -- (0,-0.1) arc (270:90:0.1)  -- cycle;
\draw[thick =2] (0,0.9) arc (90:270:0.9) ;

\draw[thick =2] (3,0.9)-- (0,0.9);
\draw[thick =2] (0,0.9) arc (90:270:0.9) ;
\draw[thick =2] (3,-0.9)-- (0,-0.9);

\draw[thick =2, dashed, color = blue] (3.4,0.5)-- (0,0.5);
\draw[thick =2, dashed, color = blue] (0,0.5) arc (90:270:0.5) ;
\draw[thick =2, dashed, color = blue] (3.4,-0.5)-- (0,-0.5);
\draw[->, thick =2, color= blue] (0,0) -- (-115:0.5) node[below] {$r$};
\draw[->, thick =2, color= black] (0,0) -- (-135:0.9) node[below left] {$2r-r_0$};

\draw[->, thick =2, color= black] (0,0) -- (135:0.1) node[above left] {$r_0$};
\draw[>=stealth,->, line width=1.3pt] (-1.3,0) -- (3.5,0) node[right] {$x$};
\draw[>=stealth,->, line width=1.3pt] (0,-1.3) -- (0,1.3) node[above] {$y$};
\draw[very thin,color=gray] (-1.3,-1.3) grid (3.5,1.3);
\draw plot[mark=|, mark size = 1pt] (1,-1) node[below right]{$1$};
\draw plot[mark=|, mark size = 1pt] (2,-1) node[below right]{$2$};
\draw plot[mark=|, mark size = 1pt] (3,-1) node[below right]{$3$};

\draw plot[mark=-, mark size = 1pt] (-1,0) node[above left]{$0$};
\draw plot[mark=-, mark size = 1pt] (-1,1) node[above left]{$1$};
\draw plot[mark=-, mark size = 1pt] (-1,-1) node[below left]{$-1$};
\end{tikzpicture}
\caption{\label{gr::horseshoedomain}Horseshoe domain used for the simulation studies of Section \ref{sec::sim}.}
\end{figure}
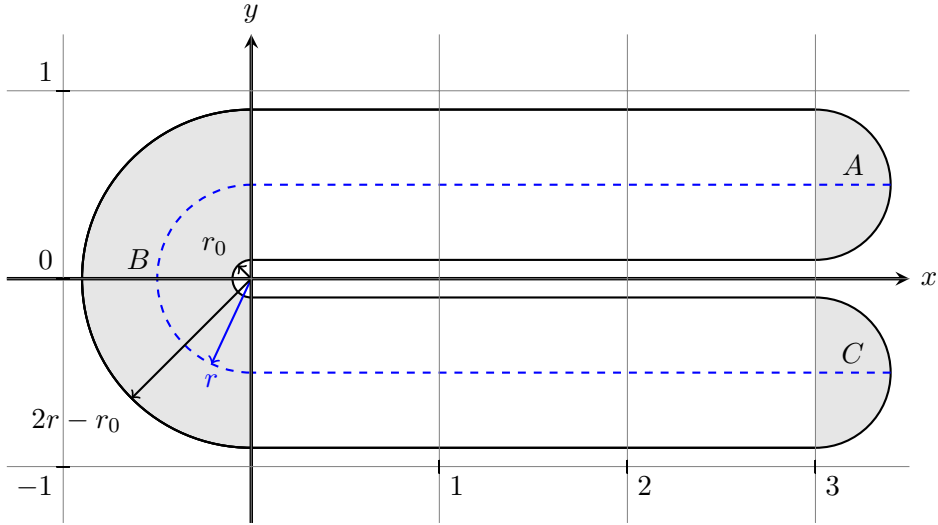

Figure \ref{gr::horseshoedomain} shows the Horseshoe domain used for the simulation studies of Section \ref{sec::sim}, where  $q =\frac{\pi r}{2}$ and   $\mathbf{1}_{\mathbf{A}}(x,y),$ $\mathbf{1}_{\mathbf{B}}(x,y)$ and $\mathbf{1}_{\mathbf{C}}(x,y)$ are indicator functions defined as
\begin{align*}
& \mathbf{1}_{\mathbf{A}}(x,y) =  \mathbf{1}_{\{(x-3)^2 + (y-r)^2 < (r-r_0)^2\}}(x,y)\mathbf{1}_{\{x>3\}}(x,y), \\% + \mathbf{1}_{\{|y-r|<r-r_0\}}(x,y)\\
& \mathbf{1}_{\mathbf{B}}(x,y) = \mathbf{1}_{\{r_0^2 \leq x^2 + y^2 \leq (2r-r_0)^2\}}(x,y),\\
& \mathbf{1}_{\mathbf{C}}(x,y) = \mathbf{1}_{\{(x-3)^2 + (y+r)^2 < (r-r_0)^2\}}(x,y)\mathbf{1}_{\{x>3\}}(x,y). % + \mathbf{1}_{\{|y+r|<r-r_0\}}(x,y)\\
\end{align*}
We set $r=0.5$ and $r_0 = 0.1$.

The test field used for the simulation study with geostatistical data is defined as:
$$
f(x,y) =
\left\{
\renewcommand{\arraystretch}{2.5}
\begin{array}{ll}
\displaystyle -\frac{1}{8}\left[x + q + (y-r)^2 +10 \right] \mathbf{1}_{\mathbf{A}}(x,y) & x> 3,\quad r_0\leq y \leq 2r-r_0, \\
\displaystyle -\frac{1}{8}\left[x + q + (y-r)^2 +10 \right]  &0 \leq x\leq 3,\quad r_0\leq y \leq 2r-r_0, \\
\displaystyle  \frac{1}{8}\left[ \atan\left(\frac{y}{x}\right)r + (\sqrt{x^2 + y^2} -r )^2 +10\right]  \mathbf{1}_{\mathbf{B}}(x,y) & x<0, \\
\displaystyle  -\frac{1}{8}\left[ x + q + (y-r)^2 +10\right]  & 0 \leq x \leq 3,\quad -r_0\geq y \geq -2r+r_0, \\
\displaystyle  -\frac{1}{8}\left[ x + q + (y-r)^2 +10\right]  \mathbf{1}_{\mathbf{C}}(x,y)  & x > 3,\quad -r_0\geq y \geq -2r+r_0. \\
\end{array}
\right. $$
The test field used for the simulation study with areal data is defined as:
$$
f(x,y) =
\left\{
\renewcommand{\arraystretch}{2.5}
\begin{array}{ll}
\displaystyle \left[x + q + (y-r)^2 \right] \mathbf{1}_{\mathbf{A}}(x,y) & x> 3,\quad r_0\leq y \leq 2r-r_0, \\
\displaystyle \left[x + q + (y-r)^2 \right] &0 \leq x\leq 3,\quad r_0\leq y \leq 2r-r_0,\\

\displaystyle \left[- \atan\left(\frac{y}{x}\right)r + (\sqrt{x^2 + y^2} -r )^2\right]  \mathbf{1}_{\mathbf{B}}(x,y) & x<0, \\
\displaystyle \left[- x - q + (y-r)^2\right]   & 0 \leq x\leq 3,\quad -r_0\geq y \geq -2r+r_0, \\

\displaystyle \left[- x - q + (y-r)^2\right]  \mathbf{1}_{\mathbf{C}}(x,y)  & x > 3,\quad -r_0\geq y \geq -2r+r_0. \\
\end{array}
\right. $$
This is an affine transformation of the one considered for geostatistical data and coincides with the test function used in \cite{Wood08}.

\end{document}